\DeclareMathOperator*{\argmax}{arg\,max}
\newtheorem{theorem}{Theorem}
\newtheorem{lemm}{Lemma}
\newtheorem{assumption}{Assumption}
\newtheorem{Definition}{Definition}
\newtheorem{rem}{Remark}
\begin{document}

\title{ Power Minimization for Age of Information Constrained Dynamic Control in Wireless Sensor Networks
%
	\thanks{ This research has been financially supported by the Infotech Oulu, the Academy of Finland (grant 323698), and Academy of Finland 6Genesis Flagship (grant 318927). M. Codreanu would like to acknowledge the support of the European Union's Horizon 2020 research and innovation programme under the Marie Sk\l{}odowska-Curie Grant Agreement No. 793402 (COMPRESS NETS). The work of M. Leinonen has also been financially supported in part by the Academy of Finland (grant 319485). 
	M. Moltafet would like to acknowledge the support of Finnish Foundation for Technology Promotion,  HPY Research Foundation, Riitta ja Jorma J. Takanen Foundation, and Nokia Foundation.}
\thanks{
	\IEEEauthorrefmark{1}Mohammad Moltafet and Markus Leinonen are  with the Centre
	for Wireless Communications--Radio Technologies, University of Oulu,
	90014 Oulu, Finland (e-mail: mohammad.moltafet@oulu.fi; markus.leinonen@oulu.fi).}
	\thanks{
	\IEEEauthorrefmark{2}Marian Codreanu and Nikolaos Pappas are with Department of Science and Technology, Link\"{o}ping University, Sweden (e-mail: marian.codreanu@liu.se; nikolaos.pappas@liu.se).}
\thanks{	
	Preliminary results of this paper were presented in \cite{moltafet2019power}.
}
\author{ 
	Mohammad~Moltafet\IEEEauthorrefmark{1}, Markus~Leinonen\IEEEauthorrefmark{1}, Marian~Codreanu\IEEEauthorrefmark{2}, and Nikolaos~Pappas\IEEEauthorrefmark{2}
}
}


\maketitle
\vspace{-2.9\baselineskip}
\begin{abstract} 
	We consider a system where multiple sensors communicate timely information about various random processes to a sink. The sensors share orthogonal sub-channels to transmit such information in the form of status update packets. A central controller can control the sampling actions of the sensors to trade-off between the transmit power consumption and information freshness which is quantified by the Age of Information (AoI). We jointly optimize the sampling action of each sensor, the transmit power allocation, and the sub-channel assignment to minimize the average total transmit power of all sensors subject to a maximum average AoI constraint for each sensor. To solve the problem, we develop a dynamic control algorithm using the Lyapunov drift-plus-penalty method and provide optimality analysis of the algorithm. According to the Lyapunov drift-plus-penalty method, to solve the main problem we need to solve an optimization problem in each time slot which is a mixed integer non-convex optimization problem. We propose a low-complexity sub-optimal solution for this per-slot optimization problem that provides near-optimal performance and we evaluate the computational complexity of the solution. Numerical results illustrate the performance of the proposed dynamic control algorithm and the performance of the sub-optimal solution for the per-slot optimization problems versus the different parameters of the system. The results show that the proposed dynamic control algorithm achieves more than $60~\%$ saving in the average total transmit power compared to a baseline policy.
	
	\emph{Index Terms--} Age of Information (AoI), Lyapunov optimization, power minimization, stochastic optimization, Wireless Sensor Networks (WSNs).
\end{abstract}	
%
%
\allowdisplaybreaks

\allowdisplaybreaks

\section{Introduction}

Freshness of the status information of various physical processes collected by multiple sensors is a key performance enabler in many applications of wireless sensor networks (WSNs) \cite{8187436,8469047,Sunbook2019}, e.g., surveillance in smart home systems and drone control. The Age of Information (AoI) was introduced as a destination centric metric that characterizes the freshness in status update systems \cite{6195689,6310931,5984917,Sunbook2019,7415972}.
A status update packet  of each sensor contains a time stamp representing  the time when the sample was generated and the measured value of the monitored process. Due to wireless channel access, channel errors, and fading etc., communicating a status update packet through the network experiences a random delay. If at a time instant $t$, the most recently received status update packet
contains the time stamp $U(t)$, the AoI is defined
as the random process $\Delta(t)=t-U(t)$. In other words,  the AoI of each sensor  is the time elapsed since the last received status update packet was generated at the sensor. 
In this work, we focus on the average AoI which is a commonly used metric to evaluate the AoI \cite{8469047,6195689,6310931,5984917,8486307,chen2019optimal,moltafet2019age,8901143, moltafet2020ISIT,8006703,8772205,bhat2019throughput,8123937,8606155,8734015,8933047,9163054}.

Besides the requirement of high information freshness, low energy consumption is vital for maintaining a status update WSN operational. Namely, the wireless sensors are typically battery limited, thus may be infeasible to recharge or replace batteries during the operation. The main contributors to the sensors' energy resources are the wireless access \cite{Raghunathan-02}, and also, the sensing/sampling part \cite{Anastasi-etal-09}. Consequently, it is crucial to minimize the amount of information (e.g., the number of data packets) that must be communicated from each sensor to the sink to meet the application requirements. This engenders the need for joint optimization of the information freshness, sensors' sampling policies, and radio resource allocation (transmit power, bandwidth etc.) for designing energy-efficient status update WSNs.


\subsection{Contributions}
We consider a WSN consisting of a set of sensors and a sink that is interested in time-sensitive information from the sensors. We minimize the average total transmit power of sensors by jointly optimizing  the sampling action, the transmit power allocation, and the sub-channel assignment under the constraint on the maximum average AoI of each sensor. To solve the proposed problem, we develop  a dynamic control algorithm using the Lyapunov drift-plus-penalty method. In addition, we provide optimality analysis of the proposed dynamic control  algorithm. According to the Lyapunov drift-plus-penalty method,  to solve the main  problem we need to solve an optimization problem in each time slot which is a mixed integer non-convex optimization problem. We propose a low-complexity sub-optimal solution for this per-slot optimization problem that provides near-optimal performance and evaluate the computational complexity of the solution. 
Numerical results show the performance of the proposed dynamic control algorithm  in terms of  transmit power consumption and AoI of the sensors versus  different   system parameters. In addition, they show that  the 
sub-optimal  solution for the per-slot optimization problems is near-optimal. 


\subsection{Related Work}

Since the introduction of the AoI, it has been under extensive study in various communication setups. For example,  AoI under various queueing models were studied in \cite{6195689,6310931,6875100,8006504,7282742,8469047,moltafet2020ISIT,moltafet2020sourceawareage,moltafet2019age,9119460}; AoI in energy harvesting based WSNs were investigated in \cite{8606155,7283009,7308962,8006703,8822722}; and  AoI under various channel access models were studied in \cite{8006544,5984917,Kosta2018AgeOI,8901143,9007478}.

There are only a few works in which optimization of radio resource allocation, scheduling, and sensor sampling action has been studied. 
  The authors of \cite{8006703} considered  an energy harvesting sensor and derived  the optimal threshold in terms of  remaining energy to trigger a new sample to minimize the AoI.  In  
 \cite{8772205}, the authors considered a status update system in which the updates of different sensors are generated with fixed rate and proposed a power control algorithm to minimize the average AoI. 
  The work in \cite{bhat2019throughput} considered a single user fading channel system and studied long-term average throughput maximization subject to average AoI and power constraints.
  The authors of  \cite{8123937} considered  an energy harvesting sensor and minimized the average AoI by determining the optimal status update policy.  
The work in \cite{8734015} considered a WSN in which sensors share one  unreliable sub-channel  in each slot. 
They minimized the  expected weighted sum average AoI of the network
by determining the transmission scheduling policy.
In \cite{8933047}, the authors considered a system where a base station serves multiple traffic streams arriving according to a stochastic process and the packets of different streams are enqueued in  separate  queues.  They minimized the  expected weighted sum AoI of the network
by determining the transmission scheduling policy.
In \cite{9163054}, the authors considered a multi-user system in which users share one  unreliable sub-channel  in each slot. They proposed an optimization problem to minimize the cost of sampling and transmitting status updates in the system under an average AoI constraint for each user in the system. They solved the problem by the  Lyapunov drift-plus-penalty method.

While the prior works contain different combinations of AoI-aware sampling, scheduling, and power optimization, to the best of our knowledge, this is the first work that proposes the joint optimization over the listed three WSN parameters: transmit power allocation, sub-channel assignment, and sampling action.


\subsection{Organization}
The rest of this paper is organized as follows. The system model and problem formulation  are  presented in Section \ref{System Model and Problem Formulation}. The Lyapunov drift-plus-penalty method to solve the proposed problem is presented in Section \ref{Solution Algorithm}. The optimality analysis of the proposed dynamic control algorithm to solve the main  problem is provided in Section \ref{Optimality Analysis of the Proposed Solution}. The proposed  sub-optimal solution  for the mixed integer non-convex optimization problem in each slot is presented in Section \ref{A Suboptimal Solution to Solve Problem}.
 Numerical results  are presented in Section \ref{Numerical Results}. Finally, the concluding remarks are expressed in Section \ref{Conclusion}.
\section{ System Model and Problem Formulation}\label{System Model and Problem Formulation}
In this section, we present the considered system model and the problem formulation.
\subsection{System Model}
We consider a WSN consisting of a set $\mathcal{K}$ of $K$ sensors and one sink,  as depicted  in Fig.~\ref{model}. The sink is interested in time-sensitive information from the sensors which measure  physical phenomena.  We assume slotted communication with normalized slots ${t\in\{0,1,\dots\}}$, where in each slot, the sensors share a set $\mathcal{N}$ of $N$ orthogonal sub-channels  with bandwidth $W$ Hz per sub-channel. We consider that  a central controller controls the sampling process of  sensors in such a way that it decides whether each sensor takes a sample or not
at the beginning of each slot $t$.

 We assume that the perfect channel state information of all sub-channels is available at the central controller at the beginning of each slot.
Let $h_{k,n}(t)$ denote the channel coefficient from sensor $k$ to the sink over sub-channel $n$ in slot $t$. We assume that 	
	$h_{k,n}(t)$ is a stationary process and it is 
	independent and identically distributed (i.i.d) over slots.

Let $\rho_{k,n}(t)$ denote the sub-channel assignment at time slot $t$ as ${\rho_{k,n}(t)\in \{0,1\},}$${ \forall k \in\mathcal{K}, n \in\mathcal{N}}$, where   $\rho_{k,n}(t)=1$ indicates that sub-channel $n$ is assigned to sensor $k$ at  time slot $t$, and  $\rho_{k,n}(t)=0$ otherwise. To ensure that at any given time slot $t$, each sub-channel can be assigned to at most one sensor, the following constraint is used
\begin{align}\label{mn001}
\sum_{k\in\mathcal{K}}\rho_{k,n}(t)\le 1, \forall n\in \mathcal{N}, t.
\end{align}

\begin{figure}
	\centering
	\includegraphics[scale=.44]{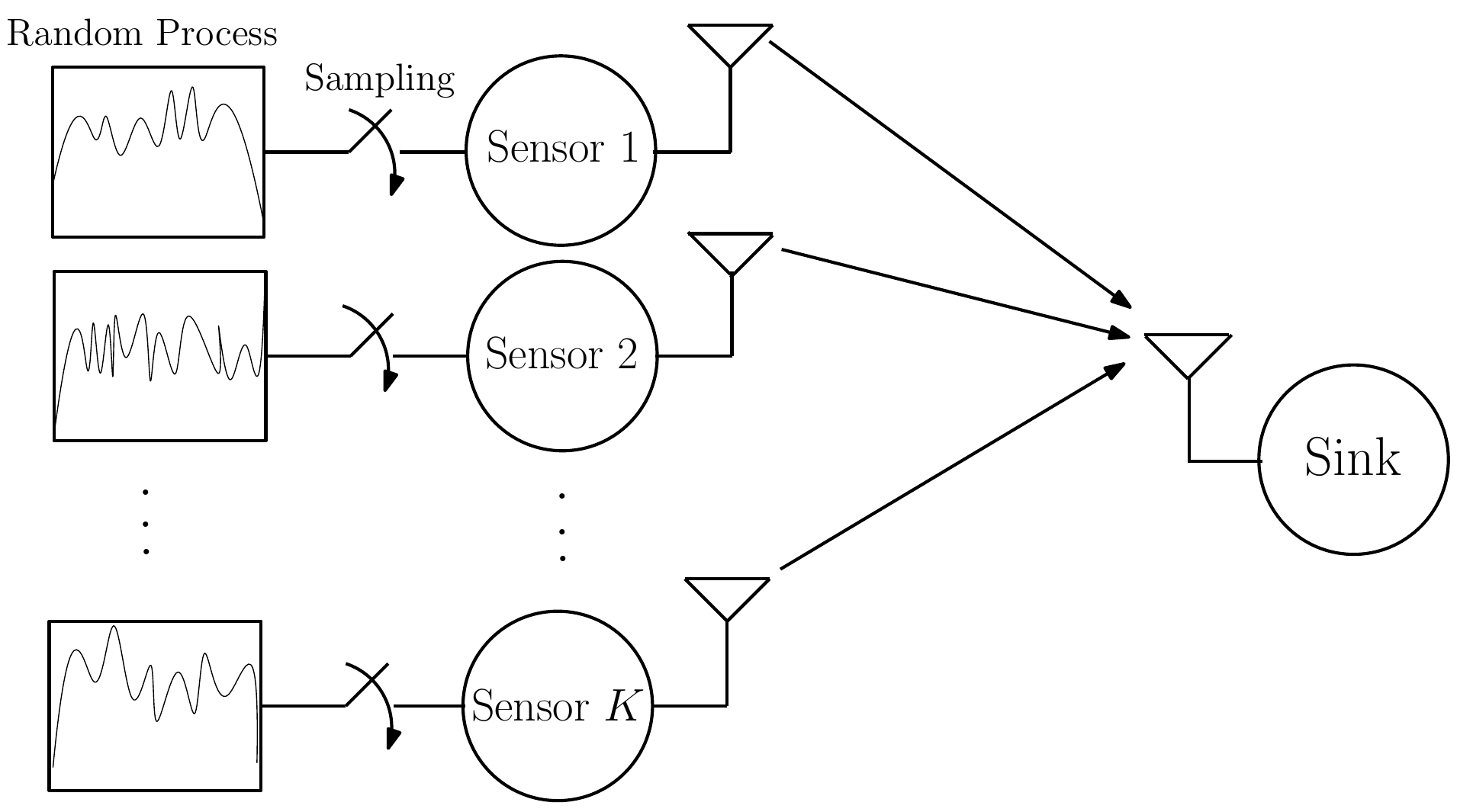}
	\caption{A WSN consisting of  $K$ sensors and one sink that receives time-sensitive information from the sensors.}
	\vspace{-13mm}
	\label{model}
\end{figure}

Let $p_{k,n}(t)$ denote the transmit power of sensor $k$ over sub-channel $n$ in slot $t$.
Then,  the signal-to-noise ratio  with respect to sensor $k$ over sub-channel $n$ in slot $t$ is given by 
\begin{align}
\gamma_{k,n}(t)=\dfrac{p_{k,n}(t)|h_{k,n}(t)|^2}{WN_0},
\end{align}
where $N_0$ is the noise power spectral density. The achievable  rate for sensor $k$ over  sub-channel $n$ in slot $ t $  is given by
\begin{align}
r_{k,n}(t)=W\log_2\left(1+\gamma_{k,n}(t)\right).
\end{align}
The achievable data rate of sensor $k$ in slot $t$ is the sum of the achievable data rates over all the assigned sub-channels at  slot $t$,   expressed  as 
\begin{align}\nonumber
R_{k}(t)=\sum_{n\in\mathcal{N}}\rho_{k,n}(t)r_{k,n}(t).
\end{align}

   Let  $b_k(t)$ denote the sampling action of  sensor $k$ at time slot $t$  as $b_k(t)\in \{0,1\}, \forall k \in\mathcal{K}$, where   $b_k(t)=1$ indicates that sensor  $k$  takes a sample at the beginning of time slot $t$, and  $b_k(t)=0$ otherwise. We assume that  sampling time (i.e.,  the time needed to acquire a sample) is negligible. 
We consider that the central controller decides that sensor $k$ takes a sample at the beginning of slot $t$ only if there are enough resources  to guarantee that the sample is successfully transmitted during the same slot $t$. 
Thus, if sensor $k$ takes a sample at the beginning of slot $t$ (i.e., $b_k(t)=1$), the sample will be transmitted during the same slot $t$ successfully.   To this end, we use the following constraint
\begin{align}
R_{k}(t)= \eta b_{k}(t), \forall k\in \mathcal{K}, t,
\end{align}
 where  $\eta$ is the  size of each status update packet (in bits).
This constraint ensures that when sensor $k$ takes a sample at the beginning of slot $t$ (i.e., $b_k(t)=1$), the achievable rate for sensor $k$ in slot $t$  is $R_{k}(t)= \eta$, guaranteeing that the sample is transmitted during the slot.

Let $\delta_k(t)$ denote the AoI of sensor $k$ at the beginning of slot $t$. 
 If sensor $k$ takes a sample at the beginning of slot $t$ (i.e., $b_k(t)=1$),  the AoI  at the beginning of slot $t+1$ drops to one, and
   otherwise (i.e., $b_k(t)=0$), the AoI  increases by one. Thus, the evolution of $\delta_k(t)$ is  characterized as 
\begin{align}\label{AoI1}
\delta_k(t+1)&=\begin{cases}
1,&\!\!\!\!\text{if} \,\,b_k(t)=1;\\
\delta_k(t)+1,&\!\!\!\!\text{otherwise}.
\end{cases}
\end{align}
The evolution of the AoI of  sensor $k$ is illustrated in Fig. \ref{AoI}.

\begin{figure}
	\centering
	\includegraphics[scale=1]{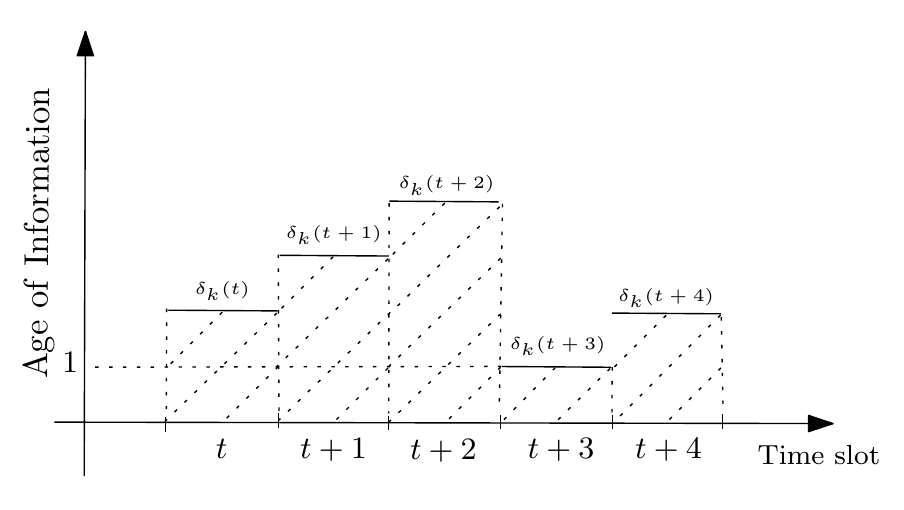}
	\caption{The evolution of the AoI of sensor $k$. Without status updates, the AoI increases by one unit during each slot; a status update received during slot $ t+2 $ caused the AoI to drop to one at the beginning of slot $ t+3 $. }
	 		\vspace{-10mm}
	\label{AoI}
\end{figure}
  
%
%
%
%
%
Following a commonly used approach \cite{8486307,Sunbook2019,8933047,8734015}, we define the average AoI of sensor $k$ as  the time average of the expected value of the AoI given as
\begin{align}\label{mnbhg01}
\Delta_k&=
\lim_{T\to \infty}\dfrac{1}{T}\sum_{t=0}^{T-1}{\mathbb{E}}[\delta_k(t)],
\end{align}
where the expectation is with respect to the  random wireless channel states and control actions made in reaction to the channel states\footnote{In this work, all expectations are taken with respect to the randomness of the wireless channel states and control actions made in reaction to the channel states.}. Without loss of generality,  we consider that the initial value of the AoI of all sensors is   $\delta_k(0)=0, \,\,\forall k\in \mathcal{K}$. 
\subsection{Problem Formulation}
Our objective  is to minimize the  average total transmit power of all sensors by jointly optimizing  the sampling action, the transmit power allocation, and the sub-channel assignment in each slot subject to  the maximum average AoI    constraint for each sensor.
Thus, the problem is formulated as follows 
  \begin{subequations}\label{eqo1}
 	\begin{align}\label{eq8a}
 	&\text{minimize} \,\,\lim_{T\to \infty}\dfrac{1}{T}\textstyle\sum_{t=0}^{T-1}\textstyle\sum_{k\in \mathcal{K}}\textstyle\sum_{n\in\mathcal{N}}\mathbb{E}[p_{k,n}(t)]\\&\label{eqo5}
 	\text{subject\,\,to}\hspace{.15cm}
\lim_{T\to \infty}\dfrac{1}{T}\textstyle\sum_{t=0}^{T-1}{\mathbb{E}}[\delta_k(t)]\le \Delta^{\text{max}}_k,~\forall k\in \mathcal{K}
 	\\&\label{eq1o5}
 	\hspace{1.85cm}\textstyle\sum_{n\in\mathcal{N}}\rho_{k,n}(t)W\log_2\left(1+\dfrac{p_{k,n}(t)|h_{k,n}(t)|^2}{WN_0}\right)= \eta b_{k}(t),~\forall  k\in \mathcal{K},~ t
 	\\&\label{eq8o2} 
 	\hspace{1.85cm}\textstyle\sum_{k\in\mathcal{K}}\rho_{k,n}(t)\le 1,~ \forall n\in \mathcal{N}, t
 	\\&\label{eq1o5000}
 	\hspace{1.85cm}p_{k,n}(t)\ge 0,~\forall  k\in \mathcal{K}, n\in \mathcal{N},~ t
 	 \\&\label{eq1o50}
 	\hspace{1.85cm}\rho_{k,n}(t)\in\{0,1\},~\forall  k\in \mathcal{K}, n\in \mathcal{N},~ t
 	 \\&\label{eq1o51}
 	\hspace{1.85cm} b_{k}(t)\in\{0,1\},~\forall  k\in \mathcal{K},~ t,
 	\end{align}
 \end{subequations}
with variables $\{p_{k,n}(t),\rho_{k,n}(t)\}_{k \in\mathcal{K}, n \in\mathcal{N}}$ and $\{b_{k}(t)\}_{k \in\mathcal{K}}$ for all ${t\in\{0,1,\dots\}}$, where $\Delta^{\text{max}}_k$ is the maximum acceptable average AoI of sensor $k$. The constraints of  problem \eqref{eqo1} are as follows.
The  inequality  \eqref{eqo5} is the maximum acceptable average  AoI constraint for each sensor; 
%
the equality  \eqref{eq1o5} 
 ensures that each sample is transmitted during one slot; the inequality \eqref{eq8o2} constrains  that each sub-channel can be assigned to at most one sensor in each slot; \eqref{eq1o5000}, \eqref{eq1o50}, and \eqref{eq1o51} represent the feasible values for the transmit power, sub-channel assignment, and sampling policy variables, respectively. 
 
Problem \eqref{eqo1} is a mixed integer non-convex problem where the constraints and the objective function both contain  averages over the optimization variables. In the next section,  a dynamic control algorithm is proposed to solve   problem \eqref{eqo1}.

Prior to that, we introduce the definitions of feasibility of  problem  \eqref{eqo1}, {channel-only} policies, and the Slater's condition  for  problem  \eqref{eqo1} which are needed in our optimality analysis in  Section \ref{Optimality Analysis of the Proposed Solution}.

\begin{Definition}\normalfont
	Problem \eqref{eqo1} is \emph{feasible} if there exists a policy that satisfies  constraints \eqref{eqo5}--\eqref{eq1o51}  \cite[Sect.~4.3]{neely2010stochastic}.
\end{Definition}
\begin{Definition}	\normalfont
	The \textit{channel-only} policies are a class of policies that   make decisions for sampling action, power allocation and sub-channel assignment  of each sensor independently every slot $ t $ based only on the observed channel state  \cite[Sect.~3.1]{neely2010stochastic}. 
\end{Definition}


\begin{assumption}\label{asum3} \normalfont
We assume that problem \eqref{eqo1} satisfies	Slater's condition \cite[Sect.~4.3]{neely2010stochastic}, i.e., there are values $\epsilon>0$, $\hat G(\epsilon)\ge0$, and a {channel-only}  policy that satisfy in each slot
	\begin{align}\label{slaterassump1}
	&\sum_{k\in \mathcal{K}}\sum_{n\in\mathcal{N}}\mathbb{E}\left[\hat p_{k,n}(t)\right]=\hat G(\epsilon),\\&\label{slaterassump2}
	\mathbb{E}[\hat\delta_k(t)]+\epsilon\le\Delta^{\text{max}}_k,~ \forall k\in\mathcal{K},
	\end{align}
	where $\hat p_{k,n}(t)$ and $\hat \delta_k(t)$  denote the allocated power to sensor $k$ over sub-channel $n$ in slot $ t $ and the value of the AoI of sensor $k$  in slot $ t $  determined by the  {channel-only} policy, respectively. 
\end{assumption}    

\section{Dynamic Control Algorithm}\label{Solution Algorithm}
In this section,  we  develop a dynamic control algorithm to solve problem \eqref{eqo1}.
To this end, we use the Lyapunov drift-plus-penalty method \cite{neely2010stochastic}, \cite{stochast009om}. According to the drift-plus-penalty method, the  average AoI constraints \eqref{eqo5} are enforced by transforming them into queue stability constraints. For each inequality  constraint \eqref{eqo5}, a virtual queue is associated in such a way that the stability of these virtual queues implies the feasibility of the average AoI constraint \eqref{eqo5}.

Let $\{Q_k(t)\}_{k\in \mathcal{K}}$ denote the virtual queues associated with AoI constraint  \eqref{eqo5}. The virtual queues are updated in each time slot as 
\begin{align}\label{consta2}
Q_k(t+1)=\max\left[Q_k(t)-\Delta^{\text{max}}_k,0\right]+\delta_k(t+1),~ \forall k\in \mathcal{K}.
\end{align}
 Here, we use the notion of strong stability; the virtual queues are strongly stable if \cite[Ch. 2]{neely2010stochastic} 
\begin{align}\label{consta3}
\lim_{T\to \infty}\dfrac{1}{T}\sum_{t=0}^{T-1}\mathbb{E}[Q_k(t)]<\infty,~\forall k\in \mathcal{K}.
\end{align}
According to \eqref{consta3}, a queue is strongly stable if its  average mean backlog is finite. Note that the strong stability of the virtual queues in \eqref{consta2} implies that the  average AoI constraint \eqref{eqo5} is satisfied.  Next, we introduce the Lyapunov function and its drift  which are needed to define the queue stability problem. 
%

 Let $\mathcal{S}(t)=\{Q_k(t),\delta_k(t)\}_{k\in \mathcal{K}}$ denote the network state at slot $t$,  and $\bold{Q}(t)$ denote a vector containing all the virtual queues, i.e., ${\bold{Q}(t)=[Q_1(t),Q_2(t),\ldots,Q_K(t)]\in\mathbb{R}^{1\times K}}$.
 Then, a quadratic Lyapunov  function $L(\bold{Q}(t))$  is defined by \cite[Ch. 3]{neely2010stochastic} 
 \begin{align}\label{consta4}
L(\bold{Q}(t))=\dfrac{1}{2}\sum_{k\in\mathcal{K}}Q^2_k(t).
 \end{align}
 The Lyapunov function measures the network congestion:  if the Lyapunov function is small, then all the queues are small, and if the Lyapunov function is large, then at least one queue is large. Therefore, by minimizing the expected change of the Lyapunov function from one slot to the next slot, queues $\{Q_k(t)\}_{k\in \mathcal{K}}$  can be stabilized \cite[Ch. 4]{neely2010stochastic}.
 \begin{Definition}\normalfont
 The conditional Lyapunov drift $ \alpha(\mathcal{S}(t)) $  is defined as the expected change in the Lyapunov function over one slot given that the current network state in slot $t$ is $\mathcal{S}(t)$. Thus, $ \alpha(\mathcal{S}(t)) $  is given by
  \begin{align}\label{consta5}
 \alpha(\mathcal{S}(t))=\mathbb{E}\left[L\left(\bold{Q}(t+1)\right)-L\left(\bold{Q}(t)\right)\mid\mathcal{S}(t)\right].
 \end{align}
 \end{Definition}
According to the drift-plus-penalty minimization method, a control policy that minimizes the objective function of problem \eqref{eqo1} with constraints \eqref{eqo5}--\eqref{eq1o51} is obtained 
by solving the following problem 
\cite[Ch. 3]{neely2010stochastic}
  \begin{subequations}\label{eqo2}
 	\begin{align}\label{eq8a2}
 		&\text{minimize}\,\,\,\alpha(\mathcal{S}(t))+V\textstyle\sum_{k\in \mathcal{K}}\textstyle\sum_{n\in\mathcal{N}}\mathbb{E}[p_{k,n}(t)\mid\mathcal{S}(t)]\\&\label{eq8o201}
 	\text{subject\,\,to}\hspace{.15cm}\eqref{eq1o5}-\eqref{eq1o51} 
 	\end{align}
 \end{subequations}
with variables $\{p_{k,n}(t),\rho_{k,n}(t)\}_{k \in\mathcal{K}, n \in\mathcal{N}}$ and $\{b_{k}(t)\}_{k \in\mathcal{K}}$, where a parameter $V\ge0$ is used to adjust the emphasis on the objective function (i.e., power minimization). Therefore, by varying $V$, a desired trade-off between the  sizes of the queue backlogs  and the objective function value 
 can be obtained. 

Since 
minimizing the objective function \eqref{eq8a2} is intractable,  we  minimize an upper bound of \eqref{eq8a2} \cite[Ch. 3]{neely2010stochastic}.
  To find an upper bound for \eqref{eq8a2}, we find an upper bound for the conditional Lyapunov drift $ \alpha(\mathcal{S}(t)) $. To this end,  we use the following inequality in which  for any $\hat{A}\ge0$, $\tilde{A}\ge0$, and $\bar{A}\ge0$ we have \cite[Ch. 3]{neely2010stochastic}
 \begin{align}\label{wrf01}
 \left(\max\left[\hat{A}-\tilde{A},0\right]+\bar{A}\right)^2\le\hat{A}^2+\tilde{A}^2+\bar{A}^2+2\hat{A}(\bar{A}-\tilde{A}).
 \end{align}
 
 By applying \eqref{wrf01} to \eqref{consta2}, an upper bound  for $ Q^2_k(t+1)$  is given as 
 \begin{align}\label{021mb4}
& Q^2_k(t+1)\le Q^2_k(t)+\left(\Delta^{\text{max}}_k\right)^2+\delta^2_k(t+1)+2Q_k(t) \left(\delta_k(t+1)-\Delta^{\text{max}}_k\right),~ \forall k\in \mathcal{K}.
 \end{align}
 
 By applying \eqref{021mb4} to the conditional Lyapunov drift $ \alpha(\mathcal{S}(t)) $, we obtain an upper bound to \eqref{consta5} as
\begin{equation}
\begin{array}{ll}
  \alpha(\mathcal{S}(t))\le \dfrac{1}{2}\mathbb{E}\bigg[\sum_{k\in\mathcal{K}}\bigg((\Delta^{\text{max}}_k)^2+\delta^2_k(t+1)+2Q_k(t)\big(\delta_k(t+1)-\Delta^{\text{max}}_k\big)\bigg)\bigg|\mathcal{S}(t)\bigg]\\=\dfrac{1}{2}\sum_{k\in\mathcal{K}}\bigg((\Delta^{\text{max}}_k)^2+\mathbb{E}[\delta^2_k(t+1)\mid\mathcal{S}(t)]+2Q_k(t)\big(\mathbb{E}[\delta_k(t+1)\mid\mathcal{S}(t)]-\Delta^{\text{max}}_k\big)\bigg).
      \end{array}
      \label{021mb40}
   \end{equation}
   
   To characterize the upper bound in \eqref{021mb40}, we need to determine  ${\mathbb{E}[\delta_k(t+1)\mid\mathcal{S}(t)]}$ and ${\mathbb{E}[\delta^2_k(t+1)\mid\mathcal{S}(t)]}$ in \eqref{021mb40}. To this end, by using the  evolution of the AoI in \eqref{AoI1},  ${\delta_k(t+1)}$ and ${\delta^2_k(t+1)}$ are calculated as 
     \begin{equation}
     \begin{array}{ll}
     \delta_k(t+1)=b_k(t)+\left(1-b_k(t)\right)(\delta_k(t)+1),~\forall k\in \mathcal{K} \\
     \delta^2_k(t+1)=b_k(t)+\left(1-b_k(t)\right)(\delta_k(t)+1)^2,~\forall  k\in \mathcal{K}.
      \end{array}
       \label{021mb401}
          \end{equation}
        By using the expressions in \eqref{021mb401}, $\mathbb{E}[\delta_k(t+1)\mid\mathcal{S}(t)]$ and  $\mathbb{E}[\delta^2_k(t+1)\mid\mathcal{S}(t)]$ in \eqref{021mb40} are given as 
        \begin{equation}
        \begin{array}{ll}
        \mathbb{E}[\delta_k(t+1)\mid\mathcal{S}(t)]=\mathbb{E}[b_k(t)\mid\mathcal{S}(t)]+(1-\mathbb{E}[b_k(t)\mid\mathcal{S}(t)])(\delta_k(t)+1),~\forall k\in \mathcal{K} \\
                \mathbb{E}[\delta^2_k(t+1)\mid\mathcal{S}(t)]=\mathbb{E}[b_k(t)\mid\mathcal{S}(t)]+(1-\mathbb{E}[b_k(t)\mid\mathcal{S}(t)])(\delta_k(t)+1)^2,~\forall k\in \mathcal{K}.
     \end{array}
\label{021mb4010}
\end{equation}
        
        By substituting  \eqref{021mb4010} into the right hand side of  \eqref{021mb40}, and adding the term ${V\sum_{k\in \mathcal{K}}\sum_{n\in\mathcal{N}}\mathbb{E}[p_{k,n}(t)\mid\mathcal{S}(t)]}$ to both sides of \eqref{021mb40}, the upper bound for \eqref{eq8a2} is given as 
  \begin{equation}
\begin{array}{ll}
        \alpha(\mathcal{S}(t))+V\sum_{k\in \mathcal{K}}\sum_{n\in\mathcal{N}}\mathbb{E}[p_{k,n}(t)\mid\mathcal{S}(t)]\le\\ \mathbb{E}\left[V\sum_{k\in \mathcal{K}}\sum_{n\in\mathcal{N}}p_{k,n}(t)+ \dfrac{1}{2}\sum_{k\in\mathcal{K}}b_k(t)\big(1-(\delta_k(t)+1)^2 -{{2Q_k(t)}}\delta_k(t)\big)\bigg|\mathcal{S}(t)\right] \\+\dfrac{1}{2}\sum_{k\in\mathcal{K}}\bigg((\Delta^{\text{max}}_k)^2+(\delta_k(t)+1)^2+2Q_k(t)(\delta_k(t)+1)-2Q_k(t)\Delta^{\text{max}}_k\bigg).
             \end{array}
\label{uppervn}
\end{equation}
        Having defined the upper bound \eqref{uppervn}, instead of minimizing  \eqref{eq8a2}, we minimize \eqref{uppervn} subject to the constraints \eqref{eq1o5}--\eqref{eq1o51} with variables $\{p_{k,n}(t),\rho_{k,n}(t)\}_{k \in\mathcal{K}, n \in\mathcal{N}}$ and $\{b_{k}(t)\}_{k \in\mathcal{K}}$. Given that we observe the channel states $\{h_{k,n}(t)\}_{k\in \mathcal{K},n\in \mathcal{N}}$ at the beginning of each slot, we use the  approach of \textit{opportunistically minimizing an expectation} to solve the optimization problem. 
        According to this approach, \eqref{uppervn} is minimized by ignoring the expectations in each slot.
          Note that the  approach of opportunistically minimizing an expectation provides  the optimal control policy \cite[Sect.~1.8]{neely2010stochastic}.

        \begin{algorithm}[h]
        	{   \caption{Proposed dynamic control algorithm for problem \eqref{eqo1}}
        		\label{table-1}
        		Step 1. \textbf{Initialization}: set ${t=0}$, set $V$, and  initialize $\{Q_k(0)=0,\delta_k(0)=0\}_{k\in \mathcal{K}}$
        		\\
        		\textbf{for}  each time slot $t$ \textbf{do}
        		\\
        		Step 2. \textbf{Sampling action, transmit power, and sub-channel assignment}: obtain $\{p_{k,n}(t),\rho_{k,n}(t)\}_{k \in\mathcal{K}, n \in\mathcal{N}}$ and $\{b_{k}(t)\}_{k \in\mathcal{K}}$ by solving the following optimization problem  
                  \begin{equation}
                  \begin{array}{ll}
        		\mbox{minimize}   & V\sum_{k\in \mathcal{K}}\sum_{n\in\mathcal{N}}p_{k,n}(t)+\dfrac{1}{2}\sum_{k\in\mathcal{K}} 
        		b_k(t)\left[1-(\delta_k(t)+1)^2-2Q_k(t)\delta_k(t)\right]\\
        		\mbox{subject to} & 
        		\eqref{eq1o5}-\eqref{eq1o51},
        		  \end{array}
                   \label{eqoi1}
        		\end{equation}
        		$~~~$with variables $\{p_{k,n}(t),\rho_{k,n}(t)\}_{k \in\mathcal{K}, n \in\mathcal{N}}$ and $\{b_{k}(t)\}_{k \in\mathcal{K}}$
        		\\
        		Step 3. \textbf{Queue update}:  update $\{Q_k(t+1),\delta_k(t+1)\}_{k\in \mathcal{K}}$ using \eqref{consta2} and  \eqref{021mb401}
        		\\
        		$~~~~~~~~~~~~$Set $t=t+1$, and go to Step 2
        		\\
        		\textbf{end for}
        	}
        \end{algorithm}
        The main steps of the proposed dynamic control  algorithm are summarized in  Algorithm~\ref{table-1}.  
        The controller observes the channel states $\{h_{k,n}(t)\}_{k\in \mathcal{K},n\in \mathcal{N}}$ and network state  $\mathcal{S}(t)$ at the beginning of each time slot $t$. Then following the  approach of opportunistically minimizing an expectation, it takes a control action to minimize \eqref{eqoi1} subject to the constraints \eqref{eq1o5}--\eqref{eq1o51} in Step~2.  Note that the objective function of \eqref{eqoi1} follows from \eqref{uppervn} because i) 
       the variables of the optimization problem are $\{p_{k,n}(t),\rho_{k,n}(t)\}_{k \in\mathcal{K}, n \in\mathcal{N}}$ and $\{b_{k}(t)\}_{k \in\mathcal{K}}$ and thus,  we neglected the second term of the upper bound \eqref{uppervn} as it does not depend on the optimization variables and ii) 
                the opportunistically minimizing an expectation approach minimizes \eqref{uppervn} by ignoring the expectations in each slot. 
         In Step~3, according to the solution of \eqref{eqoi1}, the virtual queue and AoI of each sensor are updated by using  \eqref{consta2} and \eqref{021mb401}, respectively.
         
          It is worth to note that the optimization problem \eqref{eqoi1} is a mixed integer non-convex optimization problem containing both integer (i.e., sub-channel assignment and sampling action) and continuous (i.e., power allocation) variables. One way to find the optimal solution of  problem \eqref{eqoi1} is to use an exhaustive search method.
           However,  it suffers from high computational complexity which increases exponentially with the number of variables in the system. Therefore, in Section~\ref{A Suboptimal Solution to Solve Problem}, we propose a sub-optimal solution for the  optimization problem \eqref{eqoi1}. Before that, we analyze the optimality of the proposed dynamic algorithm which is carried out in the next section.

      \section{Optimality Analysis of the Proposed Solution}\label{Optimality Analysis of the Proposed Solution}  
        	In this section, we study the performance of the proposed Lyapunov drift-plus-penalty method (i.e., Algorithm~\ref{table-1}) used to solve  problem \eqref{eqo1}. In particular, the main result of our analysis will be stated in Theorem~\ref{theo1}, which characterizes the trade-off between the optimality of the objective function (i.e., the average total transmit power) and average backlogs of the virtual queues in \eqref{consta2}.
        	
        	We first note an important property of the AoI evolution under the proposed dynamic control algorithm: the Lyapunov drift-plus-penalty Algorithm~\ref{table-1} ensures that the AoI of sensors are \textit{bounded}, i.e., there is a constant $\delta^{\text{max}}< \infty$ such that $\delta_k(t)\le \delta^{\text{max}},\forall k\in\mathcal{K}, t$. 
        	Recall that the main goal of Algorithm~\ref{table-1} is to minimize the objective function of \eqref{eqoi1} in each slot. The objective function of \eqref{eqoi1} can be written as a form 
        	$\sum_{k=1}^{K}f(b_k(t),\delta_k(t)),$ where $f(b_k(t),\delta_k(t))=V\sum_{n\in\mathcal{N}}p_{k,n}(t)-1/2b_k(t)\left[-1+(\delta_k(t)+1)^2+2Q_k(t)\delta_k(t)\right]$. The maximum value of $f(b_k(t),\delta_k(t))$ for each sensor $k$ is zero and it is achieved when the sensor does not take a sample in slot $t$, i.e., $b_k(t)=0$. However, if sensor $k$ does not take a sample, its AoI increases by one after each slot, and thus, after some slots the term $1/2\left[-1+(\delta_k(t)+1)^2+2Q_k(t)\delta_k(t)\right]$ of $f(b_k(t),\delta_k(t))$ becomes  greater than the first term $V\sum_{n\in\mathcal{N}}p_{k,n}(t)$;
        	 in this case, it is optimal for sensor $k$ to take a sample since it makes  $f(b_k(t),\delta_k(t))$ negative.
        	 Thus, we conclude that each sensor takes a sample in a finite number of time slots, and this implies that there is a constant $\delta^{\text{max}}< \infty$ such that $\delta_k(t)\le \delta^{\text{max}},\forall k\in\mathcal{K}, t$.

         	Next, we present Lemma \ref{theo01} which shows that if  problem \eqref{eqo1} is feasible, we can get arbitrarily close to the optimal solution by {channel-only} policies. This lemma  is used to prove Theorem \ref{theo1}.
             	\begin{lemm}\label{theo01}\normalfont
             		Under the assumption that  each  channel is a stationary process and i.i.d over slots, if   problem \eqref{eqo1} is feasible, then for any $\nu>0$ there is a {channel-only} policy that satisfies 
        			\begin{align}\label{cha00}
        		&\sum_{k\in \mathcal{K}}\sum_{n\in\mathcal{N}}\mathbb{E}\left[p^*_{k,n}(t)\right]\le G^{\text{opt}}+\nu,\\&\label{cha01}
        		\mathbb{E}[\delta^*_k(t)]-\nu\le \Delta^{\text{max}}_k,~ \forall k\in\mathcal{K},
        		\end{align}
        		where $ G^{\text{opt}} $ denotes the optimal value of the average total transmit power (i.e., the optimal value of the objective function of  problem \eqref{eqo1}), and $p^*_{k,n}(t)$ and $\delta_k^*(t)$  denote the allocated power to sensor $k$ over sub-channel $n$ and the value of the AoI of sensor $k$ determined by the  {channel-only} policy, respectively.
             	\end{lemm}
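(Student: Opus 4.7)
The plan is to invoke the standard performance-region characterization for stochastic optimization problems driven by an i.i.d.\ exogenous state \cite[Theorem~4.5]{neely2010stochastic}. Since $\{h_{k,n}(t)\}$ is stationary and i.i.d.\ across slots, the channel vector $\boldsymbol{h}(t)$ is the only exogenous randomness in problem \eqref{eqo1}, so the framework applies directly with $(p_{k,n}(t),\rho_{k,n}(t),b_k(t))$ playing the role of control actions, $\sum_{k,n}p_{k,n}(t)$ the penalty, and $\delta_k(t)$ the constraint quantities.

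The heart of the proof is the structural claim that the closure of the set of long-run average tuples $(\bar G,\bar\Delta_1,\ldots,\bar\Delta_K)$ achievable by channel-only policies contains every tuple achievable by an arbitrary admissible policy. This is established in \cite[Chap.~4]{neely2010stochastic} via a Caratheodory-type argument combined with time-sharing over long frames: using the i.i.d.\ structure of $\boldsymbol{h}(t)$, any history-dependent policy can be approximated in long-run averages by a stationary randomized rule whose per-slot decision depends only on the current $\boldsymbol{h}(t)$. Granted this, feasibility of \eqref{eqo1} produces a (possibly history-dependent) policy with long-run average power at most $G^{\text{opt}}+\nu/2$ and $\mathbb{E}[\delta_k(t)]\le \Delta_k^{\text{max}}$; by the equivalence, a channel-only policy attains both quantities within an additional $\nu/2$, and the triangle inequality then delivers \eqref{cha00} and \eqref{cha01}.

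The main obstacle is the structural equivalence itself, complicated here by the fact that $\delta_k(t)$ is a path-dependent state rather than a memoryless per-slot quantity: the long-run average AoI is a nonlinear functional of the sampling probabilities, so per-slot probabilistic mixing of two channel-only policies does not give a convex combination of their average AoIs. The remedy is to mix by time-sharing over long frames rather than per slot; within each frame a single pure channel-only policy is applied, and a renewal-reward argument identifies the overall time-averaged AoI with the convex combination of the frame-wise steady-state AoIs. Once this frame-based construction from \cite[Sect.~4.4]{neely2010stochastic} is in place, the remainder of the argument is the routine bookkeeping sketched above.
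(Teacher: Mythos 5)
Your route is the same as the paper's: the paper's entire proof of this lemma is the citation to \cite[Theorem~4.5, Appendix~4.A]{neely2010stochastic}, and you invoke exactly that result, with the channel vector as the i.i.d.\ exogenous state, the total power as the penalty, and the $\delta_k(t)$ as the constrained quantities. Where you go beyond the paper is in noticing the real obstruction: $\delta_k(t)$ is not a per-slot attribute of the form $\hat y(\alpha(t),\omega(t))$ of the current action and channel state, but a path-dependent recursion, so Neely's Caratheodory/stationary-randomization argument does not apply off the shelf. Identifying that is genuinely useful and is something the paper's one-line proof passes over in silence.

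However, the repair you propose does not close the gap. Frame-based time-sharing between pure channel-only policies, plus renewal-reward, only shows that the set of long-run averages achievable by channel-only policies can be \emph{convexified}; it does not establish the claim you yourself call the heart of the proof, namely that this set contains every tuple achievable by an arbitrary admissible policy. For AoI that dominance step fails. Under any channel-only policy the indicator $b_k(t)$ is i.i.d.\ Bernoulli with some rate $q_k$, and a renewal computation gives time-average AoI equal to $1/q_k$; a history-dependent periodic policy of period $T$ achieves average AoI $(T+1)/2$ at sampling rate $1/T$, i.e., it meets an AoI target $A$ at rate $1/(2A-1)$ whereas every channel-only policy needs rate $1/A$ (for $\Delta^{\text{max}}=4$ this is $1/7$ versus $1/4$ --- note the paper's own baseline schedule in Table~\ref{table:schedule} is exactly the rate-$1/7$ periodic policy achieving AoI $4$). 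Since $A\mapsto 1/A$ is convex, frame-wise mixtures of channel-only operating points cannot undercut this curve either, so the power gap does not vanish as $\nu\to 0$ and the final triangle-inequality bookkeeping has nothing to rest on. To be clear, the paper's own proof-by-citation inherits exactly the same difficulty; a watertight version would either redefine $G^{\text{opt}}$ as the optimum over channel-only (stationary randomized) policies, or supply a separate argument bounding the gain of history-dependent sampling, neither of which your sketch (or the paper) provides.
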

        		\begin{proof}
        			See proof of Theorem 4.5 in \cite[Appendix~ 4.A]{neely2010stochastic}.
        		\end{proof}
        	
        Next, we present Theorem \ref{theo1} which characterizes a trade-off between the optimality of the objective function of  problem \eqref{eqo1} and the average backlogs of the virtual queues in the system.
            \begin{theorem}\label{theo1}\normalfont
        		Suppose that  problem \eqref{eqo1} is feasible  and $L\left(\bold{Q}(0)\right)< \infty$. Then,	for  any values of parameter $V>0$,  Algorithm~\ref{table-1}  satisfies the average AoI constraints in \eqref{eqo5}. Further, let       $\bar p_{k,n}(t)$ denote the allocated power to sensor $k$ over sub-channel $n$ in slot $ t $ as determined  by Algorithm~\ref{table-1}, and  $\bar Q_k(t)$  denote the virtual queue of sensor $k$ in slot $ t $ as determined  by Algorithm~\ref{table-1}. Then,   we have  the following upper bounds for the average total transmit power and  the average backlogs of the virtual queues in the system
        			\begin{align}\label{optimalobject}
        			\lim_{T\to \infty}\dfrac{1}{T}\sum_{t=0}^{T-1}\sum_{k\in \mathcal{K}}\sum_{n\in\mathcal{N}}\mathbb{E}[\bar p_{k,n}(t)]\le \dfrac{B}{V}+G^{\text{opt}},
        			\end{align}
        			\begin{align}\label{optimalqueu}
        			\lim_{T\to \infty}\dfrac{1}{T}\sum_{t=0}^{T-1}\sum_{k\in \mathcal{K}}\mathbb{E}[\bar Q_k(t)]\le \dfrac{B+V\hat G(\epsilon)}{\epsilon},
        			\end{align}
        			where 
        			$ \hat G(\epsilon) $ with $\epsilon>0$ is specified by Assumption \ref{asum3} (i.e., \eqref{slaterassump1} and \eqref{slaterassump2}), 
        			and constant $ B $ is determined as follows
        			\begin{align}\label{equationb}
        			B=\dfrac{1}{2}\sum_{k\in\mathcal{K}}\bigg((\Delta^{\text{max}}_k)^2+(\delta^{\text{max}})^2\bigg).
        			\end{align}
        		\end{theorem}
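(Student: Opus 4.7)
The plan is to leverage the drift-plus-penalty inequality \eqref{uppervn}, together with two key facts: Algorithm~\ref{table-1} minimizes (the sample-path version of) the right-hand side of \eqref{uppervn} in every slot over all feasible actions, and the AoI is bounded, $\delta_k(t)\le\delta^{\text{max}}$ for all $k,t$. Absorbing the deterministic $(\delta_k(t)+1)^2$ and $(\Delta^{\text{max}}_k)^2$ terms into the constant $B$ of \eqref{equationb} via $\delta_k(t)\le\delta^{\text{max}}$, I would first derive the sharper ``universal'' bound
\begin{equation*}
\alpha(\mathcal{S}(t)) + V\sum_{k}\sum_{n}\mathbb{E}[\bar p_{k,n}(t)\mid\mathcal{S}(t)] \le B + V\sum_{k}\sum_{n}\mathbb{E}[p_{k,n}(t)\mid\mathcal{S}(t)] - \sum_{k} Q_k(t)\bigl(\Delta^{\text{max}}_k - \mathbb{E}[\delta_k(t+1)\mid\mathcal{S}(t)]\bigr),
\end{equation*}
valid for \emph{any} feasible comparison policy $\{p_{k,n}(t), b_k(t), \rho_{k,n}(t)\}$, because the per-slot minimization \eqref{eqoi1} ensures that Algorithm~\ref{table-1}'s drift-plus-penalty value never exceeds that of any other feasible action.

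To establish \eqref{optimalobject}, I would substitute the channel-only policy provided by Lemma~\ref{theo01}: for any $\nu>0$ it delivers $\sum_{k,n}\mathbb{E}[p^*_{k,n}(t)]\le G^{\text{opt}}+\nu$ and $\mathbb{E}[\delta^*_k(t)]\le\Delta^{\text{max}}_k+\nu$. Since this policy is independent of $\mathcal{S}(t)$, its conditional and unconditional expectations coincide. Taking total expectation, telescoping over $t=0,\ldots,T-1$, and using $L(\mathbf{Q}(0))<\infty$ yields
\begin{equation*}
\frac{\mathbb{E}[L(\mathbf{Q}(T))]-L(\mathbf{Q}(0))}{T}+V\cdot\frac{1}{T}\sum_{t=0}^{T-1}\sum_{k,n}\mathbb{E}[\bar p_{k,n}(t)] \le B + V(G^{\text{opt}}+\nu)+\nu\cdot\frac{1}{T}\sum_{t=0}^{T-1}\sum_{k}\mathbb{E}[Q_k(t)].
\end{equation*}
Dropping the nonnegative $\mathbb{E}[L(\mathbf{Q}(T))]/T$ term, dividing by $V$, and sending $T\to\infty$ and then $\nu\to 0$ gives \eqref{optimalobject} (the $\nu$-coefficient on the right vanishes once the queue bound below is in hand).

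For \eqref{optimalqueu}, I would instead substitute the Slater channel-only policy from Assumption~\ref{asum3}, for which $\mathbb{E}[\hat\delta_k(t)]\le\Delta^{\text{max}}_k-\epsilon$ and $\sum_{k,n}\mathbb{E}[\hat p_{k,n}(t)]=\hat G(\epsilon)$. The universal bound collapses to
\begin{equation*}
\alpha(\mathcal{S}(t))+V\sum_{k,n}\mathbb{E}[\bar p_{k,n}(t)\mid\mathcal{S}(t)] \le B + V\hat G(\epsilon) - \epsilon\sum_{k}Q_k(t).
\end{equation*}
Taking total expectation, telescoping, dividing by $T\epsilon$, and sending $T\to\infty$ while dropping the nonnegative power and terminal-Lyapunov terms yields \eqref{optimalqueu}. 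Feasibility of the AoI constraints \eqref{eqo5} then follows directly: iterating the recursion \eqref{consta2} gives $\tfrac{1}{T}\sum_{t=0}^{T-1}\mathbb{E}[\delta_k(t+1)]\le\Delta^{\text{max}}_k + \mathbb{E}[Q_k(T)]/T$, and \eqref{optimalqueu} implies $\mathbb{E}[Q_k(T)]/T\to 0$ (strong stability).

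The main obstacle I foresee is the careful justification of the universal bound in the first paragraph, namely that the sample-path minimization performed by \eqref{eqoi1} transfers into an inequality in \emph{expectation} against any comparison policy, including stochastic channel-only ones. The key observation is that, conditional on $\mathcal{S}(t)$ and the channel realizations $\{h_{k,n}(t)\}$, the per-slot problem is deterministic and Algorithm~\ref{table-1} attains the pointwise minimum; the inequality therefore holds almost surely and, by integration against the channel distribution, in conditional expectation. Once that step is in place, the remaining telescoping and limit manipulations are the standard drift-plus-penalty routine.
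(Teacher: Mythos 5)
Your proposal is correct and follows essentially the same route as the paper's proof: the per-slot drift-plus-penalty bound compared against the channel-only policy of Lemma~\ref{theo01} (for the power bound) and against the Slater policy of Assumption~\ref{asum3} (for the queue bound), followed by the standard telescoping and limiting arguments, with the bounded-AoI property absorbing the quadratic terms into $B$. The only cosmetic difference is that you defer the $\nu\to 0$ limit until after telescoping, whereas the paper takes it at the per-slot inequality level; both orderings are valid.
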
  
            	Before proving Theorem \ref{theo1}, we present the following remark.
        	\begin{rem}\normalfont       	
        		Inequality \eqref{optimalqueu} implies the strong stability of the virtual queues $\{Q_k(t)\}_{k\in \mathcal{K}}$ which, in turn, implies that  the average AoI constraints in \eqref{eqo5} are satisfied. In addition, from inequality \eqref{optimalqueu}, we can see that the upper bound of the average backlogs of the virtual queues is an increasing linear function of parameter $V$. Moreover,  
        		inequality \eqref{optimalobject} shows that  the value of $V$ can be chosen so that $\dfrac{B}{V}$ is arbitrarily small, and thus, the average total transmit power achieved by Algorithm~\ref{table-1} becomes arbitrarily close to the optimal value $G^{\text{opt}}$.  Consequently, parameter $V$ provides a trade-off between the optimality of the objective function (i.e., the average total transmit power) and average backlogs of the virtual queues in the system.     
        	\end{rem}
        	      	
        	      	Next, we prove Theorem \ref{theo1}.
           	\begin{proof}          		
            Let 
            $\bar \delta_k(t+1)$  and $\bar\alpha(\mathcal{S}(t)) $ denote the value of the AoI of sensor $k$ and the conditional Lyapunov drift as determined by Algorithm~\ref{table-1} in slot $t$, respectively. 
             Then, by using the bound in \eqref{021mb40} and Lemma \ref{theo01}, we have
           %
          %
            \begin{equation}
           \begin{array}{ll}
           \bar\alpha(\mathcal{S}(t))+V\sum_{k\in \mathcal{K}}\sum_{n\in\mathcal{N}}\mathbb{E}[\bar p_{k,n}(t)\mid\mathcal{S}(t)]\overset{(a)}{\le} V\sum_{k\in \mathcal{K}}\sum_{n\in\mathcal{N}}\mathbb{E}[\bar p_{k,n}(t)\mid\mathcal{S}(t)]+\\\dfrac{1}{2}\sum_{k\in\mathcal{K}}\bigg((\Delta^{\text{max}}_k)^2+\mathbb{E}[(\bar\delta_k(t+1))^2\mid\mathcal{S}(t)]+2Q_k(t)\big(\mathbb{E}[\bar \delta_k(t+1)\mid\mathcal{S}(t)]-\Delta^{\text{max}}_k\big)\bigg)\overset{(b)}{\le}\\   V\mathbb{E}\sum_{k\in \mathcal{K}}\sum_{n\in\mathcal{N}}\left[p^*_{k,n}(t)\big|\mathcal{S}(t)\right]+\dfrac{1}{2}\sum_{k\in\mathcal{K}}\bigg((\Delta^{\text{max}}_k)^2+\mathbb{E}[(\delta^*_k(t+1))^2\mid\mathcal{S}(t)]+2\bar Q_k(t)\\\big(\mathbb{E}[\delta^*_k(t+1)\mid\mathcal{S}(t)]-\Delta^{\text{max}}_k\big)\bigg)\overset{(c)}{\le} V(G^{\text{opt}}+\nu)+\dfrac{1}{2}\sum_{k\in\mathcal{K}}\bigg((\Delta^{\text{max}}_k)^2+(\delta^{\text{max}})^2+2\bar Q_k(t)\nu\bigg),
  \end{array}
\label{021mb4001}
\end{equation}
           where, as defined earlier, $p^*_{k,n}(t)$ and $\delta_k^*(t+1)$  denote the allocated power to sensor $k$ over sub-channel $n$ and the value of the AoI of sensor $k$ determined by the  {channel-only} policy that yields \eqref{cha00} and \eqref{cha01} for a fixed $\nu>0$. Inequality  $ (a) $  comes from the upper bound in \eqref{021mb40}. Inequality $ (b) $ follows because i) Algorithm~\ref{table-1} minimizes the left-hand side of inequality  $ (b) $  over all possible policies (not only {channel-only} policies) by  using the opportunistically minimizing an expectation method \cite[Sect.~1.8]{neely2010stochastic} and ii) the considered {channel-only} policy that yields \eqref{cha00} and \eqref{cha01} is a particular policy  among all the  policies. Inequality  $ (c) $  follows because i) we have $ \mathbb{E}[(\delta^*_k(t+1))^2\mid\mathcal{S}(t)]\le (\delta^{\text{max}})^2$, and ii) the considered {channel-only} policy that yields \eqref{cha00} and \eqref{cha01} is independent of the network state $\mathcal{S}(t)$, thus we have 
           \begin{align}\label{cha001}
           &\sum_{k\in \mathcal{K}}\sum_{n\in\mathcal{N}}\mathbb{E}\left[p^*_{k,n}(t)\big|\mathcal{S}(t)\right]=\sum_{k\in \mathcal{K}}\sum_{n\in\mathcal{N}}\mathbb{E}\left[p^*_{k,n}(t)\right]\le G^{\text{opt}}+\nu,\\&\label{cha011}
           \mathbb{E}[\delta^*_k(t+1)\mid\mathcal{S}(t)]-\Delta^{\text{max}}_k=\mathbb{E}[\delta^*_k(t+1)]-\nu\le\Delta^{\text{max}}_k ,~ \forall k\in\mathcal{K}.
           \end{align}
           
           By taking $\nu\rightarrow 0$, \eqref{021mb4001} results in the following inequality
           \begin{align}\label{powerjadid}
           \bar\alpha(\mathcal{S}(t))+V\sum_{k\in \mathcal{K}}\sum_{n\in\mathcal{N}}\mathbb{E}[\bar p_{k,n}(t)\mid\mathcal{S}(t)]\le B+VG^{\text{opt}},
           \end{align}
where  $ B $ is the constant defined in \eqref{equationb}, i.e.,
$
B=1/2\sum_{k\in\mathcal{K}}\big((\Delta^{\text{max}}_k)^2+(\delta^{\text{max}})^2\big).
$

       Taking expectations over randomness of the network state at both sides of \eqref{powerjadid} and using the law of iterated expectations, we have 
       \begin{align}\label{lem1enequ120}
       \mathbb{E}\left[L\left(\bar {\bold{Q}}(t+1)\right)]-\mathbb{E}[L\left(\bar {\bold{Q}}(t)\right)\right]+V\sum_{k\in \mathcal{K}}\sum_{n\in\mathcal{N}}\mathbb{E}[\bar p_{k,n}(t)]\le B+VG^{\text{opt}},
       \end{align}     
       where $\bar{\bold{Q}}(t)$ denotes a vector containing all the virtual queues in slot $t$ under Algorithm~\ref{table-1}.   	
       By summing over $t\in\{0,\ldots,T-1\}$ and using the law of telescoping sums, we have 
       \begin{align}\label{prooflem110}
       &\mathbb{E}\left[L\left(\bar{\bold{Q}}(T)\right)\right]-\mathbb{E}\left[L\left(\bar{\bold{Q}}(0)\right)\right]+V\sum_{t=0}^{T-1}\sum_{k\in \mathcal{K}}\sum_{n\in\mathcal{N}}\mathbb{E}[\bar p_{k,n}(t)]\le TB+TVG^{\text{opt}}.
       \end{align} 
      
       Now we are ready to prove  the bound of the average total transmit power in \eqref{optimalobject}. In this regard, we rewrite \eqref{prooflem110} as follows
        \begin{equation}
       \begin{array}{ll}
       V\sum_{t=0}^{T-1}\sum_{k\in \mathcal{K}}\sum_{n\in\mathcal{N}}\mathbb{E}[\bar p_{k,n}(t)]\le-{\mathbb{E}\left[L\left(\bar{\bold{Q}}(T)\right)\right]}+\mathbb{E}\left[L\left(\bar{\bold{Q}}(0)\right)\right]+ TB+TVG^{\text{opt}}\overset{(a)}{\le}\\ TB+TVG^{\text{opt}}+\mathbb{E}\left[L\left(\bar{\bold{Q}}(0)\right)\right], 
  \end{array}
\label{prooflem120}
\end{equation}
       where inequality $ (a) $ follows because we neglected the negative term on the right-hand side of the first inequality. Dividing \eqref{prooflem120} by $TV$, we have 
       \begin{align}\label{prooflem130}
       &\dfrac{1}{T}\sum_{t=0}^{T-1}\sum_{k\in \mathcal{K}}\sum_{n\in\mathcal{N}}\mathbb{E}[\bar p_{k,n}(t)]\le\dfrac{B}{V}+G^{\text{opt}}+\dfrac{\mathbb{E}\left[L\left(\bar{\bold{Q}}(0)\right)\right]}{TV}. 
       \end{align} 
       Since $\mathbb{E}\left[L\left(\bar{\bold{Q}}(0)\right)\right]$ has a finite value, taking the limit   $T\rightarrow \infty$  in \eqref{prooflem130} proves the bound of the average total transmit power in \eqref{optimalobject}.
       
        To prove the bound of the average backlogs of the virtual queues in \eqref{optimalqueu}, we assume that the Slater's condition presented in Assumption \ref{asum3} holds. In other words,  we assume that there is a {channel-only} policy so that the virtual queues are strongly stable. Thus, by using the bound in \eqref{021mb40}, we have  (cf. \eqref{021mb4001})          
 \begin{equation}
\begin{array}{ll}
       \bar\alpha(\mathcal{S}(t))+V\sum_{k\in \mathcal{K}}\sum_{n\in\mathcal{N}}\mathbb{E}[\bar p_{k,n}(t)\mid\mathcal{S}(t)]\overset{(a)}{\le} V\sum_{k\in \mathcal{K}}\sum_{n\in\mathcal{N}}\mathbb{E}[\bar p_{k,n}(t)\mid\mathcal{S}(t)]+\\\dfrac{1}{2}\sum_{k\in\mathcal{K}}\bigg((\Delta^{\text{max}}_k)^2+\mathbb{E}[(\bar\delta_k(t+1))^2\mid\mathcal{S}(t)]+2\bar Q_k(t)\big(\mathbb{E}[\bar \delta_k(t+1)\mid\mathcal{S}(t)]-\Delta^{\text{max}}_k\big)\bigg)\overset{(b)}{\le} \\  V\sum_{k\in \mathcal{K}}\sum_{n\in\mathcal{N}}\mathbb{E}\left[\hat p_{k,n}(t)\big|\mathcal{S}(t)\right]+\dfrac{1}{2}\sum_{k\in\mathcal{K}}\bigg((\Delta^{\text{max}}_k)^2+\mathbb{E}[(\hat\delta_k(t+1))^2\mid\mathcal{S}(t)]+\\2\bar Q_k(t)\big(\mathbb{E}[\hat\delta_k(t+1)\mid\mathcal{S}(t)]-\Delta^{\text{max}}_k\big)\bigg)\overset{(c)}{\le} V\hat G(\epsilon)+B-\epsilon\sum_{k\in\mathcal{K}}\bar Q_k(t),
  \end{array}
\label{021mb40012}
\end{equation}
       where, as defined earlier, $\hat p_{k,n}(t)$ and $\hat\delta_k(t+1)$  denote the allocated power to sensor $k$ over sub-channel $n$ and the value of the AoI of sensor $k$ determined by the  {channel-only} policy that yields \eqref{slaterassump1} and \eqref{slaterassump2} in the Slater's condition, respectively. Inequality $ (a) $ comes from the upper bound in \eqref{021mb40}. Inequality $ (b) $ follows because i) Algorithm~\ref{table-1} minimizes the left-hand side of inequality $ (b) $ over all possible policies (not only channel-only policies) by  using the opportunistically minimizing an expectation method and ii) the considered {channel-only} policy that yields \eqref{slaterassump1} and \eqref{slaterassump2} in the Slater's condition is a particular policy  among all the policies. Inequality $ (c) $ follows because i) we have  $ \mathbb{E}[(\hat \delta_k(t+1))^2\mid\mathcal{S}(t)]\le (\delta^{\text{max}})^2$, ii) constant $B$ is given by \eqref{equationb}, and iii) the channel-only policy that yields \eqref{slaterassump1} and \eqref{slaterassump2} is independent of the network state $\mathcal{S}(t)$, thus we have 
       \begin{align}\label{asuumslat2}
       &\sum_{k\in \mathcal{K}}\sum_{n\in\mathcal{N}}\mathbb{E}\left[\hat p_{k,n}(t)\mid\mathcal{S}(t)\right]=\sum_{k\in \mathcal{K}}\sum_{n\in\mathcal{N}}\mathbb{E}\left[\hat p_{k,n}(t)\right]=\hat G(\epsilon),\\&\label{slaterassump2201}
       \mathbb{E}[\hat\delta_k(t+1)\mid\mathcal{S}(t)]+\epsilon=\mathbb{E}[\hat\delta_k(t+1)]+\epsilon\le\Delta^{\text{max}}_k,~ \forall k\in\mathcal{K}.
       \end{align}
       
   Taking expectations over randomness of the network state at both sides of the resulting inequality in \eqref{021mb40012} and using the law of iterated expectations, we have     
       	\begin{align}\label{lem1enequ1210}
       \mathbb{E}\left[L\left(\bar{\bold{Q}}(t+1)\right)]-\mathbb{E}[L\left(\bar{\bold{Q}}(t)\right)\right]+V\sum_{k\in \mathcal{K}}\sum_{n\in\mathcal{N}}\mathbb{E}[\bar p_{k,n}(t)]\le B-\epsilon\sum_{k\in \mathcal{K}}\mathbb{E}[\bar Q_k(t)]+V\hat G(\epsilon).
       \end{align} 
               	
       By summing over $t\in\{0,\ldots,T-1\}$ and using the law of telescoping sums, we have 
       \begin{equation}
      \begin{array}{ll}
       \mathbb{E}\left[L\left(\bar{\bold{Q}}(T)\right)\right]-\mathbb{E}\left[L\left(\bar{\bold{Q}}(0)\right)\right]+V\sum_{t=0}^{T-1}\sum_{k\in \mathcal{K}}\sum_{n\in\mathcal{N}}\mathbb{E}[\bar p_{k,n}(t)]\le\\ TB-\epsilon\sum_{t=0}^{T-1}\sum_{k\in \mathcal{K}}\mathbb{E}[\bar Q_k(t)]+TV\hat G(\epsilon).
     \end{array}
\label{prooflem1110}
   \end{equation}
       
       To prove the bound 
       in \eqref{optimalqueu}, we rewrite \eqref{prooflem1110} as follows
\begin{equation}
\begin{array}{ll}
       \epsilon\sum_{t=0}^{T-1}\sum_{k\in \mathcal{K}}\mathbb{E}[\bar Q_k(t)]\le-\mathbb{E}\left[L\left(\bar{\bold{Q}}(T)\right)\right]+\mathbb{E}\left[L\left(\bar{\bold{Q}}(0)\right)\right]-V\sum_{t=0}^{T-1}\sum_{k\in \mathcal{K}}\sum_{n\in\mathcal{N}}\mathbb{E}[\bar p_{k,n}(t)]\\+TB+TV\hat G(\epsilon)\overset{(a)}{\le} TB+TV\hat G(\epsilon)+\mathbb{E}\left[L\left(\bar{\bold{Q}}(0)\right)\right],
           \end{array}
       \label{prooflem14}
       \end{equation} 
       where inequality $ (a) $ follows because  we neglected the negative terms on the right-hand side of the first inequality. 
       By dividing \eqref{prooflem14} by $T\epsilon$, we have 
       \begin{align}\label{prooflem15}
       &\dfrac{1}{T}\sum_{t=0}^{T-1}\sum_{k\in \mathcal{K}}\mathbb{E}[\bar Q_k(t)]\le \dfrac{B+V\hat G(\epsilon)}{\epsilon}+\dfrac{\mathbb{E}\left[L\left(\bar{\bold{Q}}(0)\right)\right]}{T\epsilon }.
       \end{align} 
       Since $\mathbb{E}\left[L\left(\bar{\bold{Q}}(0)\right)\right]$ has a finite value, taking the limit   $T\rightarrow \infty$  in \eqref{prooflem15} proves the bound of the average backlogs of the virtual queues in \eqref{optimalqueu}.
                  \end{proof}

\section{A Sub-optimal Solution for the Per-Slot  Problem \eqref{eqoi1} }\label{A Suboptimal Solution to Solve Problem}
    As discussed in Section \ref{Solution Algorithm}, we need to solve an instance of the optimization problem \eqref{eqoi1} in each slot (see Algorithm~\ref{table-1}). Problem \eqref{eqoi1} is a mixed integer non-convex optimization problem containing both integer (i.e., sub-channel assignment and sampling action) and continuous (i.e., power allocation) variables. Thus, finding its optimal
    solution is not trivial and conventional methods for solving convex optimization problems cannot directly be used. 
    The optimal solution of problem \eqref{eqoi1} can be found by an exhaustive search method that requires searching over all possible combinations of the binary variables, i.e., the sub-channel assignment and sampling action variables, and solving a (simple) power allocation problem for each such combination. However, the computational complexity of this method increases exponentially with the number of sampling action and sub-channel assignment variables in the system (i.e., $ KN $). Thus, finding an appropriate  sub-optimal solution with low computational complexity is necessary for the optimization problem \eqref{eqoi1}.       
    %
%
%
%
    Next,  in  Section \ref{Solution Algorithm02}, we present the proposed sub-optimal solution. 
    Then, in Section \ref{Complexity of the Proposed Sub-Optimal Solution}, we present  complexity analysis of the sub-optimal solution.
    \subsection{Solution Algorithm}\label{Solution Algorithm02}

        The main idea behind our proposed sub-optimal solution is to reduce the computational complexity from that of the full exhaustive search method described above. To this end,  we search only over all possible combinations of sampling action variables $ b_k(t), \forall k\in\mathcal{K} $; for each such combination,
    %
    %
  we propose a low-complexity two-stage optimization strategy to find a sub-optimal solution to the joint power allocation and sub-channel assignment problem. Then, among all the solutions, the best one is selected as the sub-optimal solution to  problem \eqref{eqoi1}. Note that if the number of sub-channels $N$ is less than the number of sensors $K$ (i.e., $N<K$) we do not search over all possible  combinations of sampling action variables because the maximum number of sensors that can take a sample in each slot is $N$. 
    
 
     Let  ${\bold{b}(t)=[b_{1}(t),\ldots,b_{K}(t)]}$ denote a vector containing all binary sampling action variables in slot $t$. Further, let ${\mathcal{B}}$ denote the set of all possible values of binary vector  $\bold{b}(t)$ with cardinality $|\mathcal{B}|=2^K$.  In addition, let $\tilde{\mathcal{B}}\subseteq \mathcal{B}$ denote the set of  all possible values of such binary vectors  $\bold{b}(t)$ for which the number of sensors that have a sample to transmit is less than or equal to the  number of sub-channels $N$, i.e., $\tilde{\mathcal{B}}=\{\bold{b}(t)\mid\bold{b}(t)\in\mathcal{B},\|\bold{b}(t)\|_0\le N\},$ where $\|\cdot\|_0$ counts the number of non-zero elements in a vector. Note that for  $K\le N,$ the set $\tilde{\mathcal{B}}$ is equal to set ${\mathcal{B}}$, i.e., $\tilde{\mathcal{B}}=\mathcal{B}$.
     
     The steps of the proposed sub-optimal solution are summarized in Algorithm~\ref{table-2_1}.  
     Step~2 performs exhaustive search over feasible  sampling actions $\bold{b}(t)\in \tilde{\mathcal{B}}$:
     for each  such $\bold{b}(t)$ a sub-optimal power allocation and sub-channel assignment is obtained by finding an approximate solution for the mixed integer non-convex problem \eqref {eqoi1} with variables $\{\rho_{k,n}(t),p_{k,n}(t)\}_{k \in\mathcal{K}, n \in\mathcal{N}}$ (i.e., problem \eqref{power1c010}) which is presented in the next subsections. Step~3 returns a sub-optimal solution to problem  \eqref{eqoi1}.

    \begin{algorithm}
    	{   \caption{Proposed sub-optimal solution algorithm to problem  \eqref{eqoi1}  }
    		\label{table-2_1}
             Step 1.  \textbf{Initialization}: set ${O=0}$, ${\{\tilde{b}_k(t)= 0\}_{k \in\mathcal{K}}}$, ${\{\tilde{p}_{k,n}(t)=0\}_{k \in\mathcal{K}, n \in\mathcal{N}}}$, and ${\{\tilde{\rho}_{k,n}(t)=0\}_{k \in\mathcal{K}, n \in\mathcal{N}}}$ \\
    		 Step 2.  For each $\bold{b}(t)\in \tilde{\mathcal{B}}$  \textbf{do}	            

    		   	 $~~~~~~~~~$A. Find a sub-optimal solution for the following joint power allocation and sub-channel 
    		   	 $~~~~~~~~~~~~~~$assignment problem 
    		   	     	  \begin{equation}
    		   	     	\begin{array}{ll}
    		   	     	\mbox{minimize}   &      V\sum_{k\in \mathcal{K}}\sum_{n\in\mathcal{N}}p_{k,n}(t)+\dfrac{1}{2}\sum_{k\in\mathcal{K}} 
    		   	     	b_k(t)\left[1-(\delta_k(t)+1)^2-2Q_k(t)\delta_k(t)\right]\\ 
    		   	     \mbox{subject to}& 
    		   	     	\eqref{eq1o5}-\eqref{eq1o50},
    		   	      \end{array}
                      \label{power1c010}
    		   	    \end{equation}
    		   	 $~~~~~~~~~~$with variables $\{\rho_{k,n}(t),p_{k,n}(t)\}_{k \in\mathcal{K}, n \in\mathcal{N}}$\\
    		   	 $~~~~~~~~~~$B. Denote the obtained solution by $\{\dot{p}_{k,n}(t)\}_{k \in\mathcal{K}, n \in\mathcal{N}}$, $\{\dot{\rho}_{k,n}(t)\}_{k \in\mathcal{K}, n \in\mathcal{N}}$, and $\{\dot{ b}_{k}(t)\}_{k \in\mathcal{K}}$  \\
   	  $~~~~~~~~~~$C. If $ V\sum_{k\in \mathcal{K}}\sum_{n\in\mathcal{N}}\dot{p}_{k,n}(t)+\dfrac{1}{2}\sum_{k\in\mathcal{K}} 
   	  \dot{b}_k(t)\left[1-(\delta_k(t)+1)^2-2Q_k(t)\delta_k(t)\right]\le O$:\\
   	  $~~~~~~~~~~~~~$ I. Set $\{ \tilde{p}_{k,n}(t)=\dot{p}_{k,n}(t)\}_{k \in\mathcal{K}, n \in\mathcal{N}}$, $\{ \tilde{\rho}_{k,n}(t)=\dot{\rho}_{k,n}(t)\}_{k \in\mathcal{K}, n \in\mathcal{N}}$, $\{\tilde{b}_k(t)=\dot{ b}_{k}(t)\}_{k \in\mathcal{K}}$ \\
   	 $~~~~~~~~~~~~~$ II. Set $O= V\sum_{k\in \mathcal{K}}\sum_{n\in\mathcal{N}}\tilde{p}_{k,n}(t)+\dfrac{1}{2}\sum_{k\in\mathcal{K}} 
   	 \tilde{b}_k(t)\left[1-(\delta_k(t)+1)^2-2Q_k(t)\delta_k(t)\right] $\\
   	 Step 3.   Return  	 $\{ \tilde{\rho}_{k,n}(t)\}_{k \in\mathcal{K}, n \in\mathcal{N}}$, $\{ \tilde{\rho}_{k,n}(t)\}_{k \in\mathcal{K}, n \in\mathcal{N}}$, and  $\{\tilde{b}_k(t)\}_{k \in\mathcal{K}}$ as a sub-optimal solution to  problem  \eqref{eqoi1}
	
    	}
    \end{algorithm}

  The power allocation and sub-channel assignment problem  \eqref{power1c010} is a mixed integer non-convex optimization problem. Thus, 
  we propose a two-stage sequential optimization method to find a sub-optimal solution to \eqref{power1c010}.  The method performs first a greedy sub-channel assignment, which is followed by power allocation. 
    
    \subsubsection{ Sub-channel Assignment}\label{Sub-channel Assignment}
     The proposed greedy algorithm to assign the  sub-channels is presented in Algorithm \ref{table-2_2}.
     The main idea is to find the strongest sub-channel among all the sensors that have a sample to transmit, and assign this sub-channel greedily to that sensor (Step~1). This assigned sub-channel is then removed from the set of available  sub-channels  because each sub-channel can be assigned to at most one sensor (Step~2). For fairness, the sensor that was just assigned the sub-channel is removed from the set of competing sensors guaranteeing that this sensor cannot get more sub-channels until the other sensors get the same number of sub-channels (Step~3). This procedure is repeated until all the sub-channels are assigned to the sensors. 
     
%
%
    \begin{algorithm}
    	   \caption{Sub-Channel Assignment}
    		\label{table-2_2}
    	 \textbf{ Initialization:} a) initialize sets $  \mathcal{K}'=\{k \mid k\in\mathcal{K}, b_{k}(t)=1\},$ and  $\mathcal{N}'=\mathcal{N}$, b) initialize $\{\rho_{k,n}(t)=0\}_{k \in\mathcal{K}, n \in\mathcal{N}},$ and c) set $ i=1 $\\
    		$~~~~$\textbf{ While} {$ i\le N $} \textbf{do}\\
    		$~~~~~~~~~~~~$Step 1. Set $ \rho_{k,n}(t)=1$ where $(k,n)=\argmax_{k\in\mathcal{K}',n\in\mathcal{N}'}|h_{k,n}(t)|^2 $\\
    		$~~~~~~~~~~~~$Step 2. $\mathcal{N}'= \mathcal{N}'\setminus\{n\} $\\
    		$~~~~~~~~~~~~$Step 3. If $ \mathcal{K}'\setminus\{k\}=\emptyset,$  set $\mathcal{K}'= \mathcal{K} $; otherwise, set $ \mathcal{K}'=\mathcal{K}'\setminus\{k\} $\\
    	   	$~~~~~~~~~~~~$Step 4. $i= i+1$\\
    	    $~~~~$\textbf{ End while}

    \end{algorithm}

      \subsubsection{Power Allocation}\label{Power Allocation}
          		
          		Given that the  sub-channels have been assigned, power allocation for each sensor that has a sample to transmit can be determined separately. Let $\mathcal{N}_k\subseteq\mathcal{N}$ denote the set of sub-channels assigned to sensor $k$. 
          		Thus, for each sensor $k$ that has a sample to transmit (i.e., $ b_k(t)=1 $), the following optimization problem needs to be solved
 \begin{equation}
\begin{array}{ll}
\mbox{minimize}   &   \sum_{n\in\mathcal{N}_k}p_{k,n}(t)
      	\\
      	\mbox{subject to}& \sum_{n\in\mathcal{N}_k}\rho_{k,n}(t)W\log_2\left(1+\dfrac{p_{k,n}(t)|h_{k,n}(t)|^2}{WN_0}\right)= \eta
      	\\
      	&p_{k,n}(t)\ge 0,~\forall  n\in \mathcal{N}_k,
   \end{array}
\label{power101}
\end{equation}
  with variables $\{p_{k,n}(t)\}_{ n \in\mathcal{N}_k}$. The optimization problem \eqref{power101} can be solved by  the water-filling approach \cite[Proposition~2.1]{7341067}.
%

    \subsection{Complexity of the Proposed Sub-Optimal Solution }\label{Complexity of the Proposed Sub-Optimal Solution}
    In this section, we investigate the complexity of the proposed sub-optimal solution for problem~\eqref{eqoi1} and compare it with that of the full exhaustive search method.
   The proposed sub-optimal solution presented in Algorithm \ref{table-2_1} has three main steps, namely, i) determining the sampling actions which is solved by searching over all feasible sampling action combinations, ii) sub-channel assignment which is solved by the proposed greedy algorithm presented in Algorithm \ref{table-2_2}, and iii) power allocation which is solved by the water-filling approach. 
   The computational complexity  of the search over feasible sampling actions is equal to the cardinality of $\tilde{\mathcal{B}}$, i.e.,  $|\tilde{\mathcal{B}}|$ which is less than or equal to $2^K$ (recall that since $\tilde{\mathcal{B}}\subseteq \mathcal{B}$, we have  $|\tilde{\mathcal{B}}|\le|\mathcal{B}|=2^K$). Since the proposed greedy algorithm to solve the sub-channel assignment (i.e., Algorithm \ref{table-2_2}) has $N$ iterations, its   computational complexity is $N$. Since the water-filling approach needs at most $N$ iterations, its worst-case computational complexity is $N$ \cite{7341067}. Thus, since for each possible sampling action combination,  a sub-channel assignment and a  power allocation problem with complexity $2N$ is solved, 
   the  computational complexity  of the proposed sub-optimal solution presented in Algorithm~\ref{table-2_1} is $\mathcal{O}=2N|\tilde{\mathcal{B}}|$. Next, we investigate the computational complexity of the exhaustive search method to solve problem~\eqref{eqoi1}.
   
The exhaustive search method searches over all feasible binary sampling action and sub-channel assignment variables. For each such combination, a convex power allocation problem is solved. 
Since the computational complexity  of the search over sampling actions is $|\tilde{\mathcal{B}}|$
and there are $NK$ binary  sub-channel assignment variables, computational complexity of the binary search  is $\mathcal{O}=|\tilde{\mathcal{B}}|2^{KN}$. Assuming that the water-filling approach presented in \cite[Proposition~2.1]{7341067} is used to solve the power allocation problem, 
the worst-case computational complexity of the power allocation is $N$. Thus, the computational complexity of the exhaustive search method  to solve problem \eqref{eqoi1} is  $\mathcal{O}=N|\tilde{\mathcal{B}}|2^{KN}$. 

Considering the discussion above, we can see that as compared to the full exhaustive search method, the computational complexity of the proposed sub-optimal solution reduces by a factor that is exponential in $KN$.

        \section{Numerical and Simulation Results}\label{Numerical Results}
        In this section, we evaluate the performance of the proposed dynamic control algorithm presented in Algorithm~\ref{table-1} in terms of transmit power consumption and AoI of the sensors. In addition, we evaluate the optimality gap  of the sub-optimal solution applied to solve problem~\eqref{eqoi1} presented in  Algorithm~\ref{table-2_1}. 
        \subsection{Simulation Setup}        
        We consider a WSN depicted in Fig. \ref{Location}, where the sink is located in the center and ${K=10}$ sensors are randomly placed  in a two-dimensional plane.
        Sensors are indexed  according to their distance to the sink in such a way that sensor $ 1 $ is the nearest sensor to the sink and sensor $ 10 $ is the farthest. 
        The channel coefficient from sensor $k$ to the sink over sub-channel $n$ in slot $t$ is modeled as ${h_{k,n}(t)=(d_k/d_0)^\xi c_{k,n}(t)}$, where $d_k$ is the distance from sensor $k$  to the sink, $ d_0 $ is the
        far field reference distance,  $\xi$ is the path loss exponent,
        and $c_{k,n}(t)$ is a Rayleigh distributed random coefficient.   Accordingly,  $(d_k/d_0)^\xi$  represents  large-scale fading and the term ${c_{k,n}(t)}$ represents small-scale Rayleigh fading. We set ${\xi=-3}$, $ {d_0=1} $, and the parameter of Rayleigh distribution as $0.5$.  The bandwidth of each sub-channel is  ${W=180}$ kHz. The size of each  packet is ${\eta=600}$ Bytes. We set ${\Delta_{k}^{\text{max}}=\Delta^{\text{max}}, \forall k}$. 
         \begin{figure}
        	\centering
        	\includegraphics[scale=0.35]{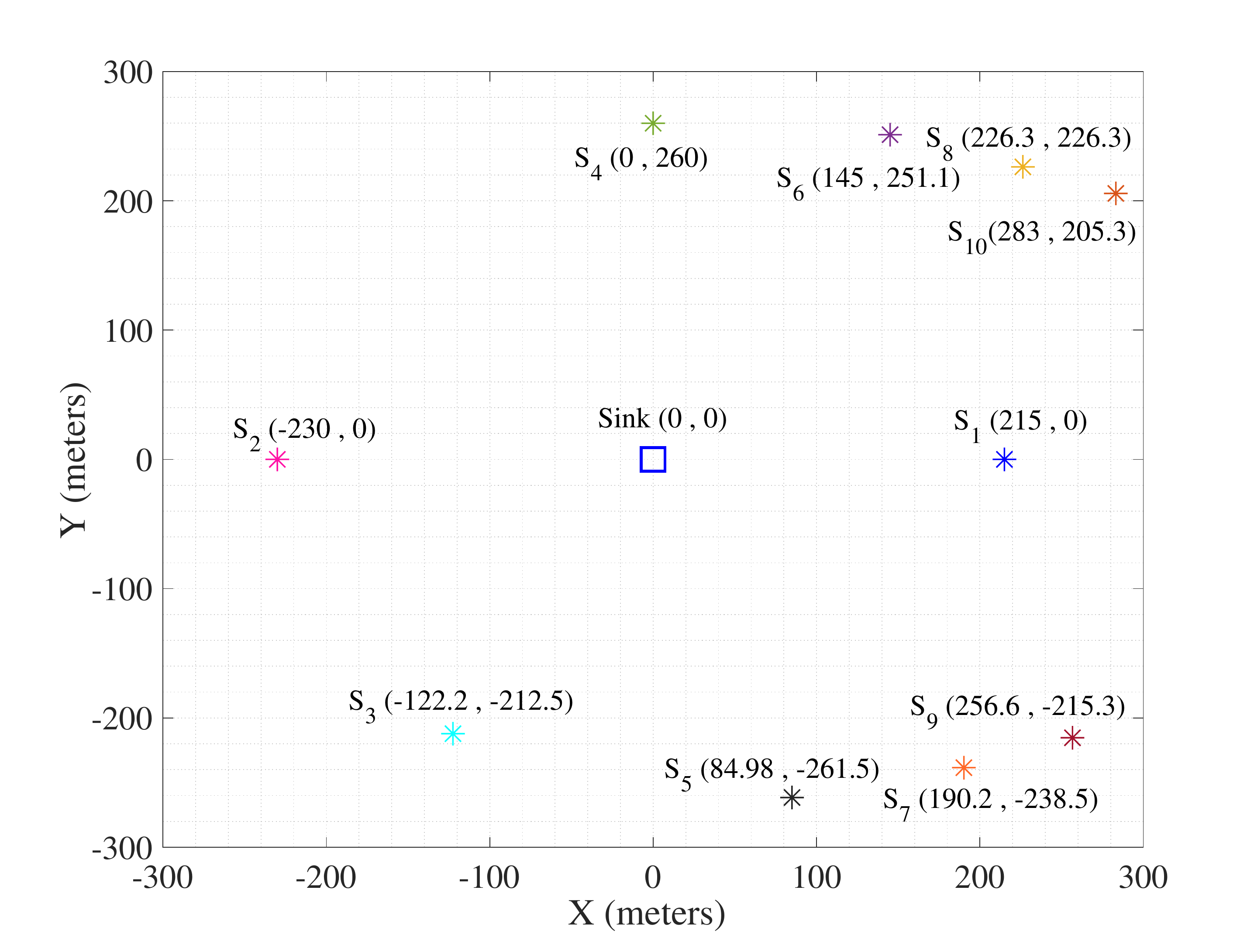}
        	\caption{The considered WSN where the sink is located in the center and ${K=10}$ sensors are randomly placed. The coordinates of sensor $k$ is shown by $S_k(x_k,y_k)$. }
        	\label{Location}
        		\vspace{-10mm}
        \end{figure}
          \subsection{Performance of the Proposed Dynamic Control Algorithm}
          In this section, we evaluate the performance of the proposed dynamic control algorithm (i.e., Algorithm~\ref{table-1}) in terms of transmit power consumption and AoI of the sensors. To solve the optimization problem \eqref{eqoi1}, we use Algorithm \ref{table-2_1}.
          
          Fig.~\ref{PowerN10K10D4} illustrates the evolution of the  average total transmit power for different values of parameter  $V$ with maximum acceptable average AoI of sensors ${\Delta^{\text{max}}=4}$ and $N=10$ sub-channels. The figure shows that when  $V$ increases, the average total transmit power decreases. This is because  when $V$ increases, more emphasis is set to minimize the total transmit power in the objective function of optimization problem \eqref{eqoi1}. 
          
          In addition, for the benchmarking, we consider a baseline policy that has a fixed sampling rate. The sampling rate is set as $ 1/7 $, so that resulting average AoI of each sensor is equal to the maximum acceptable average AoI ${\Delta^{\text{max}}_k=\Delta^{\text{max}}=4, \forall k\in \mathcal{K}}$. The sampling schedule of the considered baseline method is presented in Table~\ref{table:schedule}. For this baseline policy,  the sub-channel assignment and transmit power allocation are determined by the  proposed methods in Sections \ref{Sub-channel Assignment} and  \ref{Power Allocation}, respectively.  
          As it can be seen in Fig.~\ref{PowerN10K10D4}, the proposed dynamic control algorithm achieves more than $60~\%$  saving in the average total transmit power compared to the baseline policy.
%
          %
This shows the advantage of the proposed dynamic control algorithm in optimizing the sampling process in contrast to relying on a pre-defined sampling schedule which enforces a sensor to transmit a status update even under a bad channel situation.

\begin{table}[t!]
\caption{Sampling schedule of the considered fixed rate sampling policy of rate $1/7$}\centering
\begin{tabular}{ |c|c|c|c|c|c|c|c|c|c|c|c|c|c|c|c|c| } 
\hline
Time slot $t$ & 1 & 2 & 3 & 4 & 5 & 6 & 7 & 8 & 9 & 10 & 11 & 12 & 13 & 14 & 15 & 16\\ 
\hline
Sensor $k$ with $b_k(t)=1$ & 1 & 2 & 3 & 4 & 5, 6 & 7, 8 & 9, 10 & 1 & 2 & 3 & 4 & 5, 6 & 7, 8 & 9, 10 & 1 & $\cdots$ \\  
\hline
\end{tabular}
\label{table:schedule}
\end{table}
          
        \begin{figure}
        	\centering
        	\includegraphics[scale=0.47]{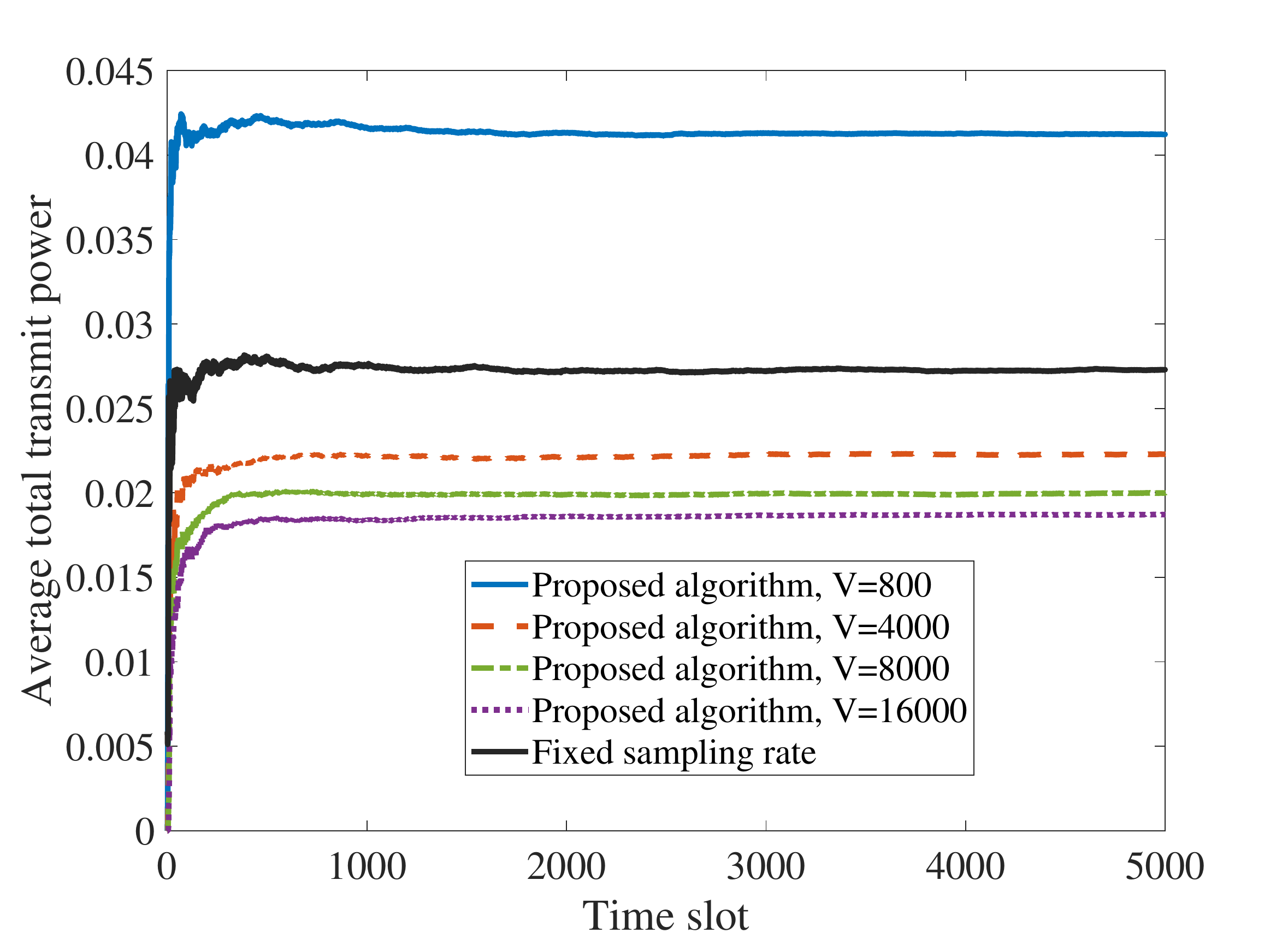}
        	\caption{ Evolution of the average total transmit power of the sensors for different values of $V$ with ${\Delta^{\text{max}}=4}$ and $N=10$. For comparison, a control policy with a fixed sampling rate is included as a baseline method.}
        	\vspace{-9mm}
        	\label{PowerN10K10D4}
        \end{figure}

Fig. \ref{lTradeoff} illustrates the trade-off between the average total transmit power and average  backlogs of the virtual queues as a function of $V$ for  ${\Delta^{\text{max}}=4}$ and $N=10$ sub-channels. As it can be seen, by increasing  $V$ the  average backlogs of the virtual queues increase and the average total transmit power decreases. This shows the inherent trade-off provided by the drift-plus-penalty method which was shown in Theorem \ref{theo1}. Moreover, the figure demonstrates that when $ V $ is sufficiently large, increasing it further does not significantly reduce the power. This is visible in Fig. \ref{PowerN10K10D4} as well.
\begin{figure}
	\centering
	\includegraphics[scale=0.44]{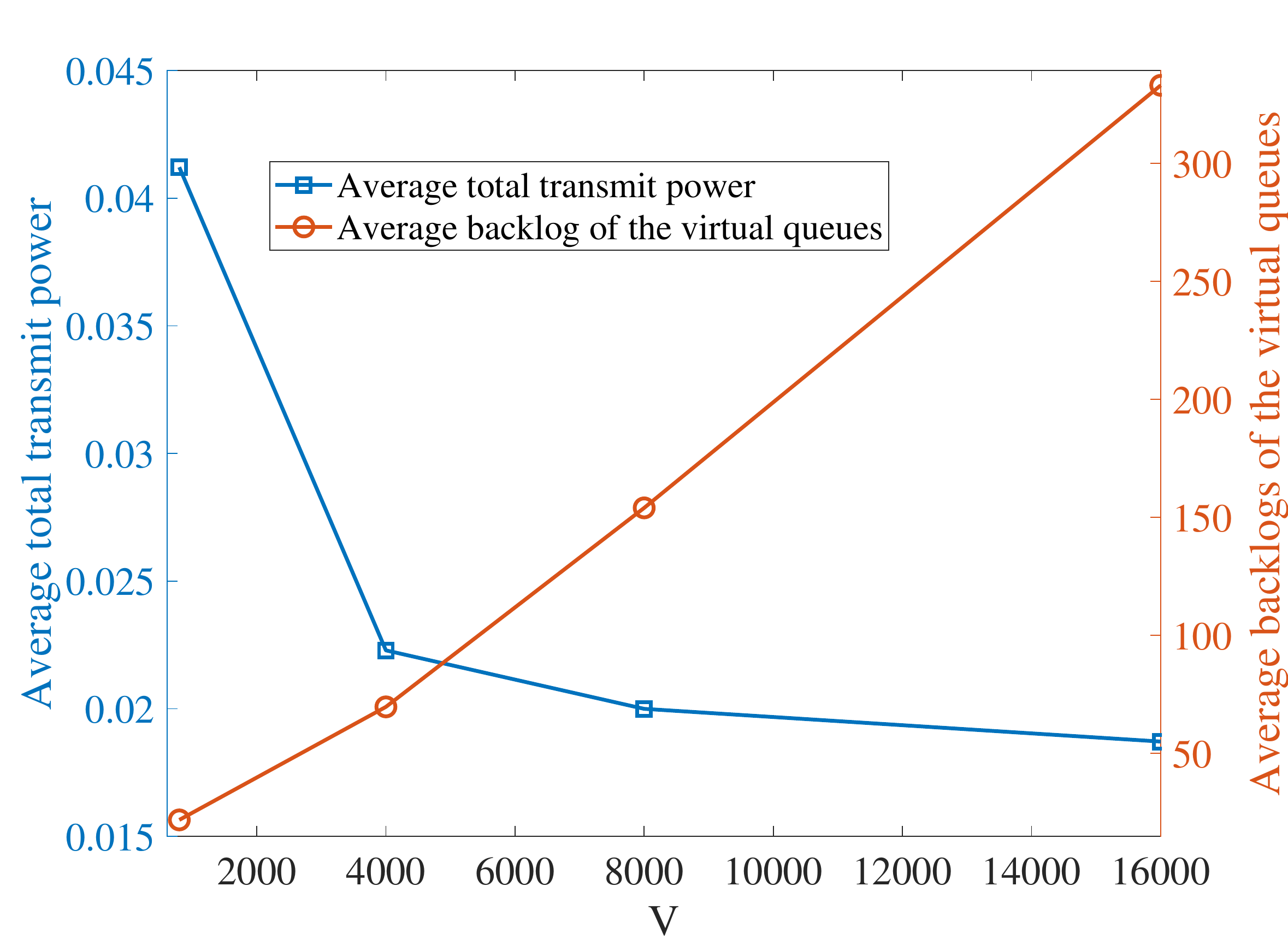}
	\caption{Trade-off between the average total transmit power of the sensors and average  backlogs of the virtual queues as a function of $V$. }
	\vspace{-9mm}
	\label{lTradeoff}
\end{figure}

 Fig. \ref{lPower101520v800} illustrates the evolution of the average total transmit power for different  numbers of sub-channels $N$ with  ${\Delta^{\text{max}}=4}$ and $V=8000$. The figure shows that when  $N$ increases, the average total transmit power decreases, as expected. This is because  when $N$ increases, more sub-channels can be assigned to each sensor, and thus, one packet can be transmitted with less power. 
In addition, we can see that the effect of increasing the number of sub-channels from $N=6$ to $N=8$, and further to $N=10$, is more profound. This is because when there are fewer sub-channels than sensors, due to the orthogonality of  the sub-channel assignment, all the sensors cannot be served in every slot and some sensors can get a sub-channel only after a few slots. Note that in order to meet the AoI constraint, the sensors may be enforced to transmit their sample even if the power consumption is excessive.
%
 On the other hand, increasing the number of sub-channels from $N=10$ to $N=12$ yields only negligible gain. This is because the greedy sub-channel assignment policy guarantees that each sensor will be assigned at least one sub-channel.

%
%
%
\begin{figure}
	\centering
	\includegraphics[scale=0.47]{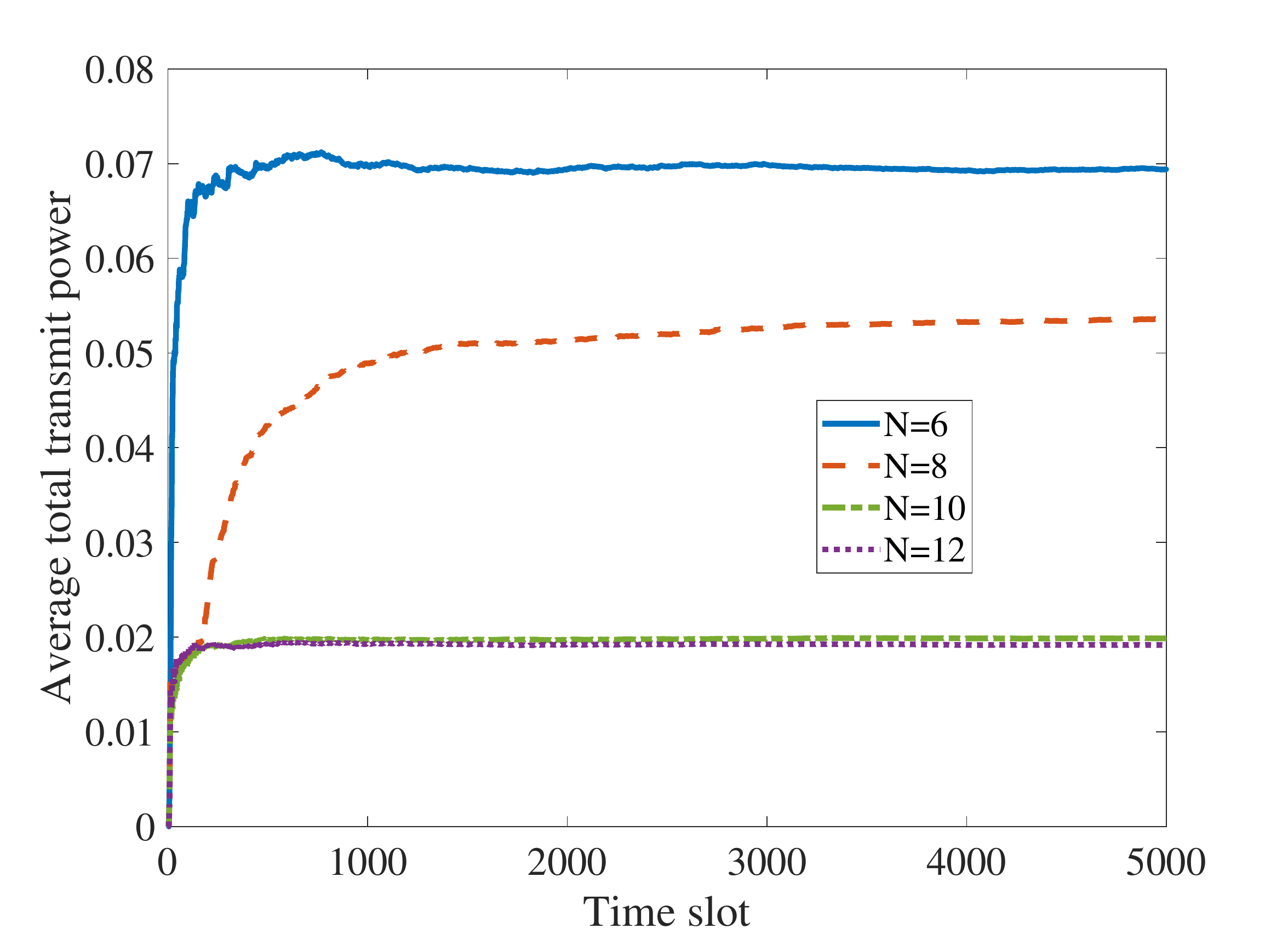}
	\caption{ Evolution of the average total transmit power of the sensors for different numbers of sub-channels $N$ with ${\Delta^{\text{max}}=4}$ and $V=8000$.}
	\vspace{-10mm}
	\label{lPower101520v800}
\end{figure}

        Fig. \ref{lPowerDeltamax234V800} illustrates the evolution of the average total transmit power for different values of maximum AoI ${\Delta^{\text{max}}}$ with  $N=10$ sub-channels and $V=8000$. The figure shows that when  $\Delta^{\text{max}}$ decreases, the average total transmit power increases. This is because  when $\Delta^{\text{max}}$ decreases, each sensor needs to take samples more frequently to satisfy constraint \eqref{eqo5}. 

         \begin{figure}
      	\centering
      	\includegraphics[scale=0.47]{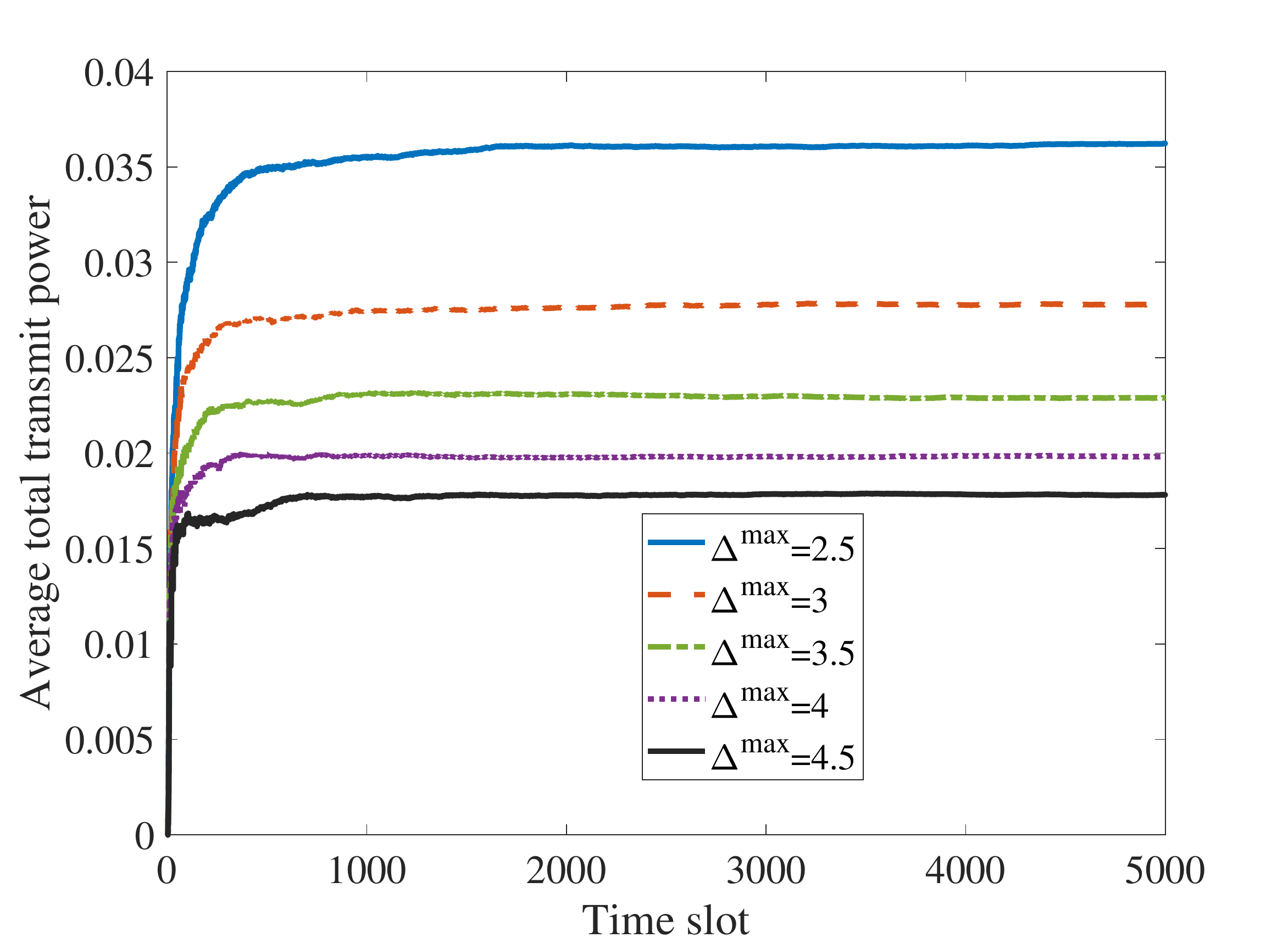}
      	\caption{  Evolution of the average total transmit power of the sensors for different values of $\Delta^{\text{max}}$ with $N=10$ and $V=8000$. }
      		\vspace{-9mm}
      	\label{lPowerDeltamax234V800}
      \end{figure}

  Fig. \ref{lAgeof_each_sensor} depicts the average AoI for individual sensors as a function of $V$ for   ${\Delta^{\text{max}}=4}$ and $N=10$ sub-channels.  According to this figure, when  $V$ increases, the average AoI of each sensor increases as well. This is because  when $V$ increases, the backlogs of the virtual queues associated with the average AoI constraint \eqref{eqo5}  increase. 
  We can also observe that the average AoI of each sensor is always smaller than  the maximum acceptable average AoI ${\Delta^{\text{max}}}$.  This validates that the drift-plus-penalty method is able to meet the average constraint through enforcing the virtual queue stability.  Moreover, we can see that a sensor that has a longer distance  to the sink has higher average AoI. This is because a sensor far away from the sink must compensate for the large-scale fading by using more power, and thus, it samples rarely.   

          \begin{figure}
        	\centering
        	\includegraphics[scale=0.47]{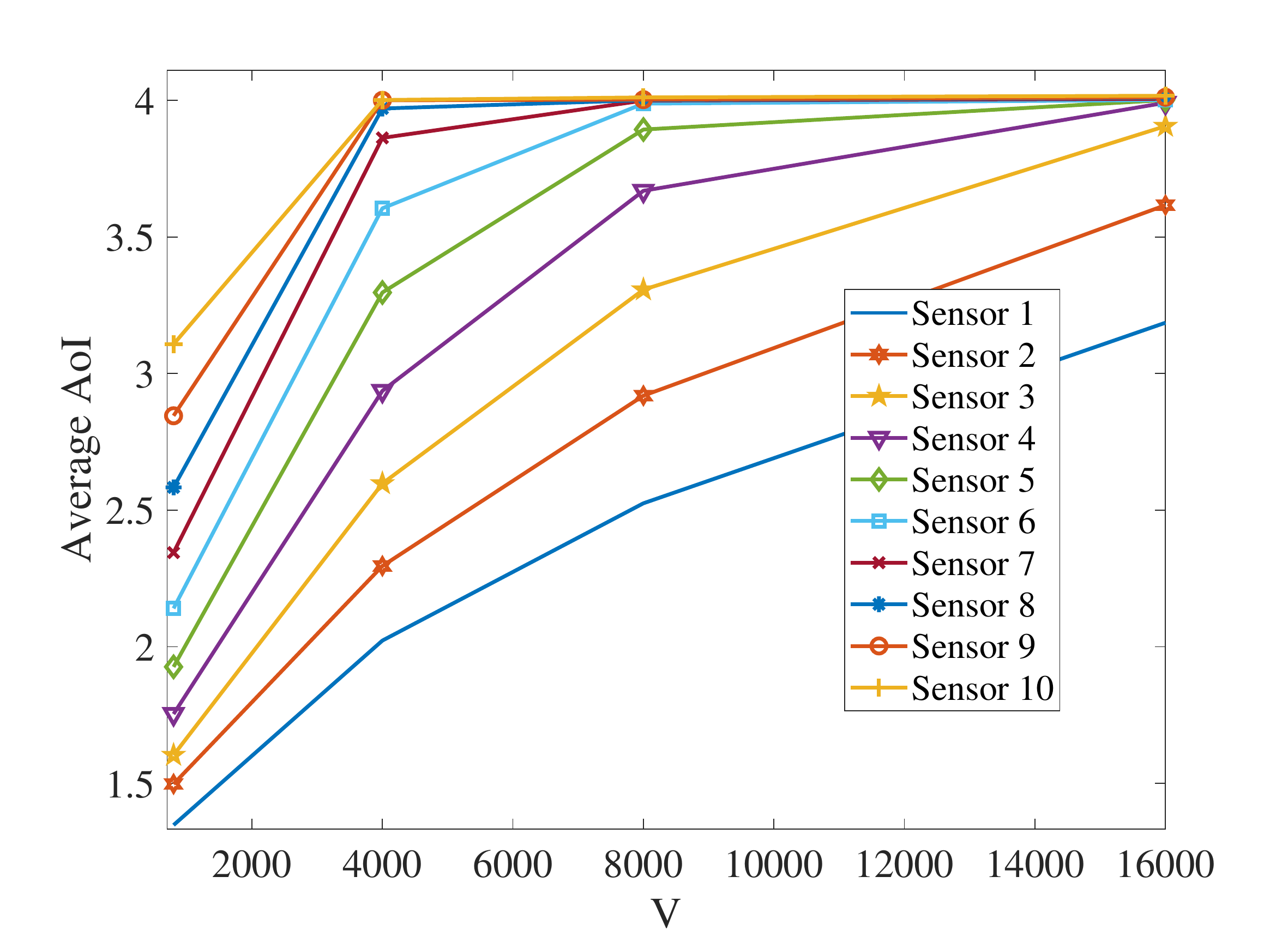}
        	\caption{  The average AoI of different sensors as a function of $V$ for ${\Delta^{\text{max}}=4}$ and $N=10$. }
        		\vspace{-8mm}
        	\label{lAgeof_each_sensor}
        \end{figure}

        \subsection{Performance of the Sub-Optimal Solution }
     To evaluate the optimality gap of the proposed sub-optimal solution for \eqref{eqoi1} presented in Algorithm \ref{table-2_1}, we compare the results obtained by  the  sub-optimal solution to those of the optimal solution calculated by the full exhaustive search method. In this regard, we consider a small setup with $K=5$ sensors $\{S_1,\ldots,S_5\}$ (see Fig. \ref{Location}) and ${N=5}$ sub-channels. 
     The maximum acceptable average AoI of sensors is ${\Delta^{\text{max}}=4}$. 
     
     Fig. \ref{f03} illustrates the evolution of the average total transmit power for different values of $V$. Fig. \ref{f01} illustrates the trade-off between the average total transmit power of the sensors and average backlogs of the virtual queues as a function of $V$. Fig. \ref{lAgeof_each_sensor2}  depicts the average AoI of different sensors as a function of $V$. Fig. \ref{f02} depicts the evolution of the average AoI of different sensors for $V=8000$.  From these figures, we can see that the proposed sub-optimal solution provides a near-optimal solution for the optimization problem \eqref{eqoi1}.

     \begin{figure}[h]
   	\centering
   	\includegraphics[scale=0.47]{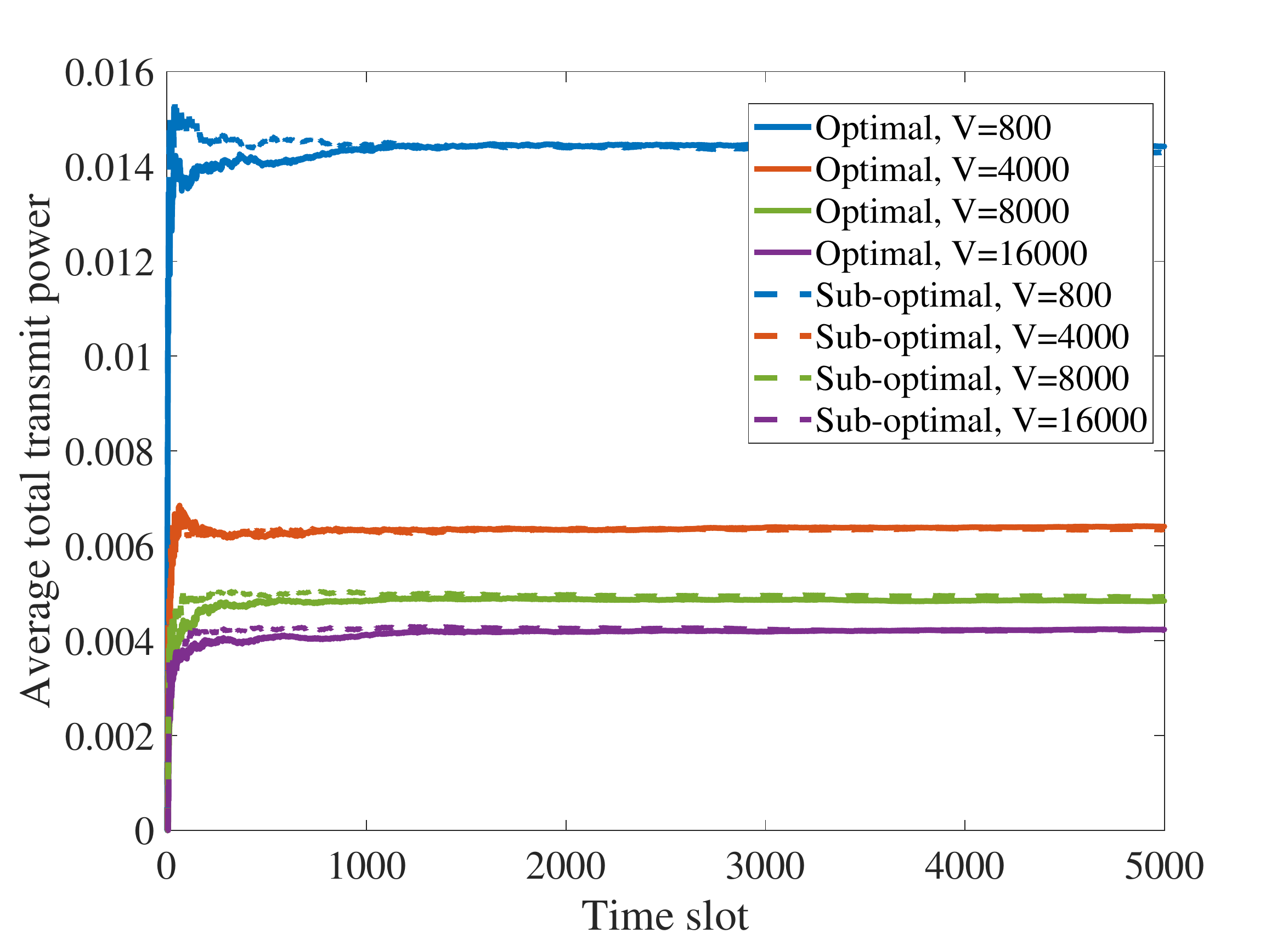}
   	\caption{ Evolution of the average total transmit power of the sensors for different values  of $V$. }
   	\vspace{-10mm}
   	\label{f03}
   \end{figure}

\begin{figure}[h]
	\centering
	\includegraphics[scale=0.47]{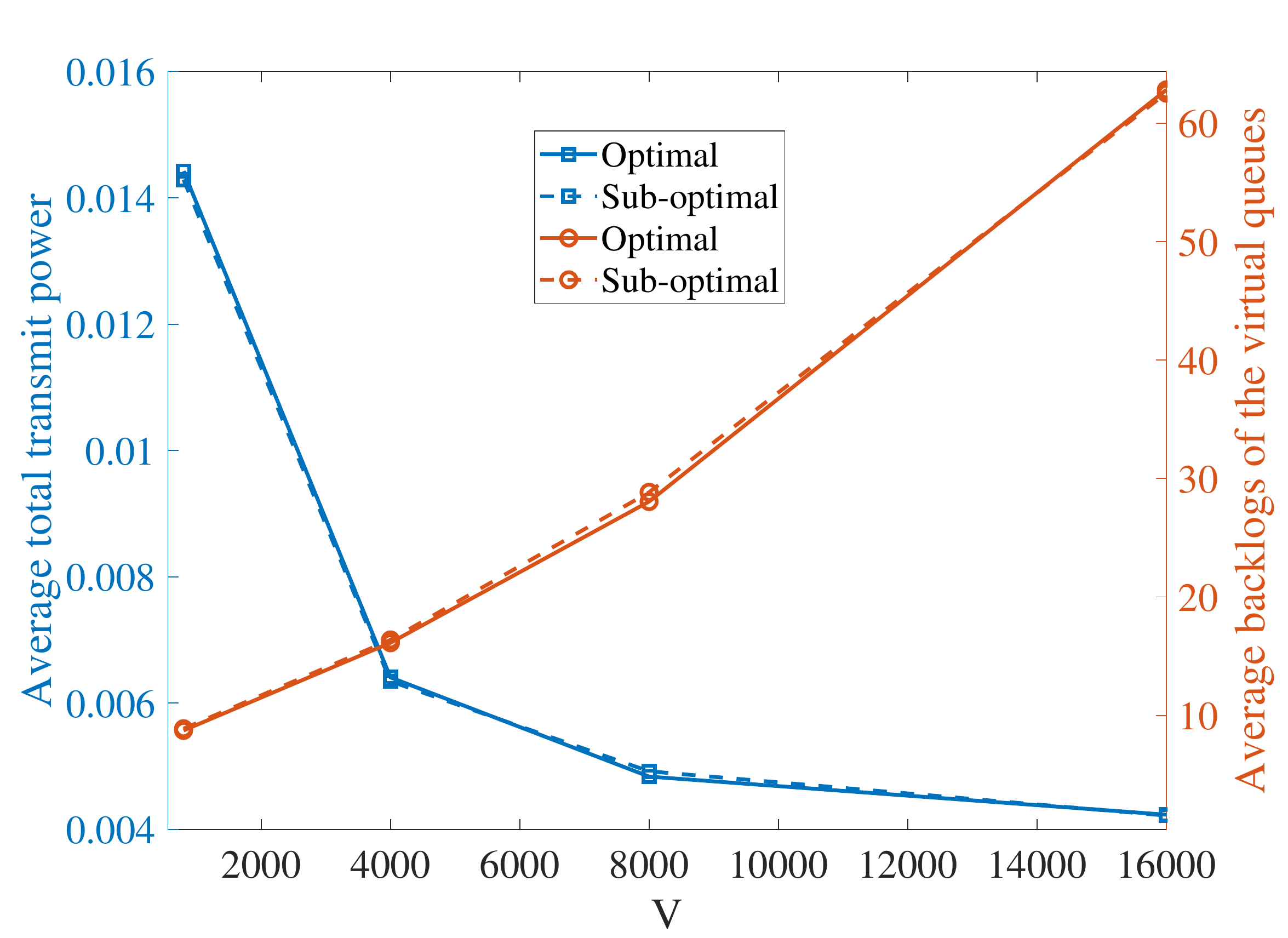}
	\caption{Trade-off between the average total transmit power and average backlogs of the virtual queues as a function of $V$. }
	\vspace{-9mm}
	\label{f01}
\end{figure}
\begin{figure}[h]
	\centering
	\includegraphics[scale=0.47]{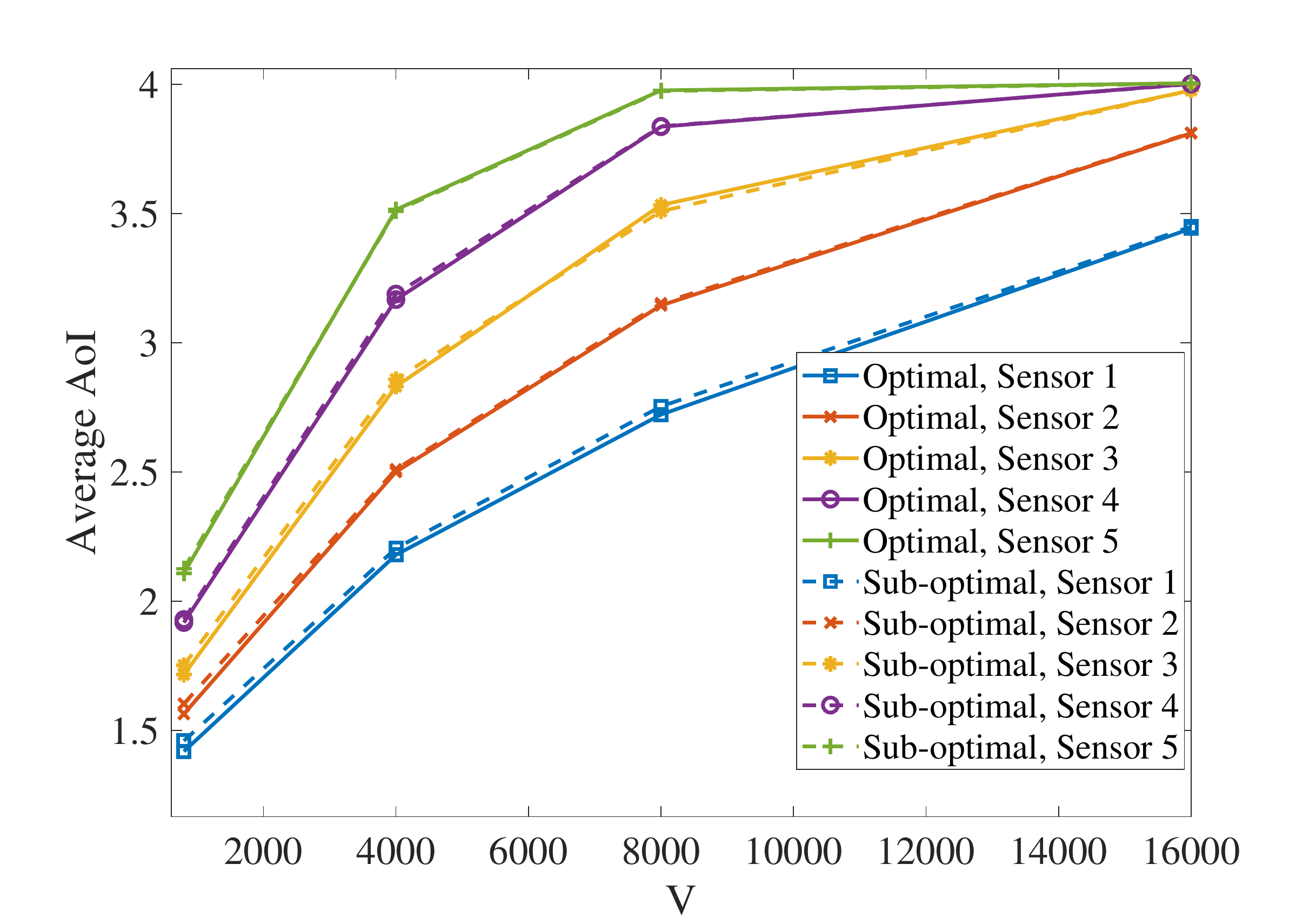}
	\caption{  The average AoI of different sensors as a function of $V$. }
		\vspace{-9mm}
	\label{lAgeof_each_sensor2}
\end{figure}

   \begin{figure}[h]
	\centering
	\includegraphics[scale=0.47]{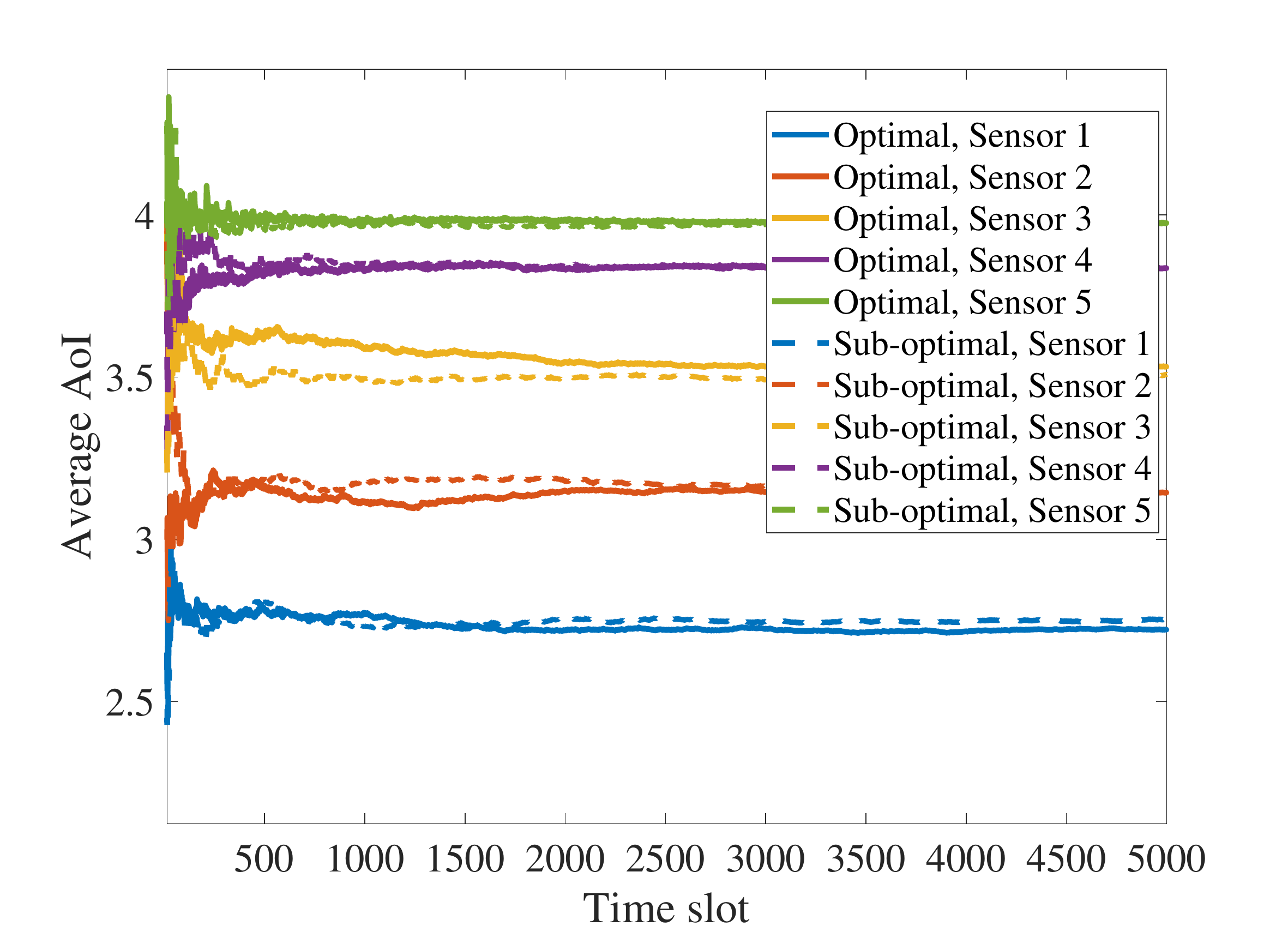}
	\caption{ Evolution of the average AoI of different sensors for $V=8000$. }
		 		\vspace{-9mm}
	\label{f02}
  \end{figure}


\section{Conclusions}\label{Conclusion}
We considered a status update system consisting of a set of sensors and one sink.  The sink controls the sampling process of the sensors in a way that it decides whether each sensor takes a sample or not
at the beginning of each slot. The status update packets of the sensors are transmitted by sharing a set of orthogonal sub-channels in each slot.  We formulated a problem to minimize the average total transmit power of sensors under the average AoI  constraint for each sensor. To solve the proposed problem, we used the Lyapunov drift-plus-penalty method. This method provides a trade-off between the average total transmit power and the average AoI of the sensors. We conducted optimality analysis to study this  trade-off. In the numerical results section,  we  showed the performance of the proposed  dynamic solution algorithm in terms of transmit power consumption and AoI of sensors. The results showed that by using the proposed dynamic control algorithm more than $60~\%$ saving in the average total transmit power  can be achieved compared to a baseline policy.  In addition, we showed that the sub-optimal solution for the per-slot optimization problems provides a near-optimal solution. 

The numerical results illustrated the inherent trade-off between the average AoI of the sensors and the
	average total transmit power that the Lyapunov drift-plus-penalty method brings in the system. 
	This trade-off is adjusted by the penalty parameter $V$. 	
	A high value of $V$ is beneficial in that it enforces smaller transmit powers, yet at the cost of increasing the average AoI of each sensor. 	
	The results validated that, regardless of the value of $V$, the proposed drift-plus-penalty method met the time average AoI constraints through successfully enforcing the virtual queue stability.	
	Regarding a selection of parameter $V$ in practice, we observed that when $V$ is sufficiently large, increasing it further does not significantly reduce the power.	



\bibliographystyle{IEEEtran}
\bibliography{Bibliography}

\end{document}